\documentclass[english,journal]{IEEEtran}
\usepackage[T1]{fontenc}
\usepackage[latin9]{inputenc}
\usepackage{color}
\usepackage{amsmath}
\usepackage{amsthm}
\usepackage{amssymb}
\usepackage{stackrel}
\usepackage{graphicx}
\usepackage{multirow,multicol}

\makeatletter
\theoremstyle{definition}
\newtheorem{defn}{\protect\definitionname}
\theoremstyle{plain}
\newtheorem{thm}{\protect\theoremname}
\theoremstyle{plain}
\newtheorem{lem}{\protect\lemmaname}
\theoremstyle{plain}
\newtheorem{prop}{\protect\propositionname}
\theoremstyle{plain}
\newtheorem{cor}{\protect\corollaryname}

\@ifundefined{date}{}{\date{}}

\usepackage{amsfonts}
\usepackage{cite}
\usepackage{array}
\usepackage{algorithm}
\usepackage{algorithmic}
\usepackage{subfigure}
\usepackage{stfloats}
\usepackage{graphicx}

\usepackage[font=small,labelfont=bf]{caption}

\usepackage{gensymb}

\makeatother

\usepackage{babel}

\hyphenpenalty=5000
\tolerance=1000

\makeatother

\usepackage{babel}
\providecommand{\corollaryname}{Corollary}
\providecommand{\definitionname}{Definition}
\providecommand{\lemmaname}{Lemma}
\providecommand{\propositionname}{Proposition}
\providecommand{\theoremname}{Theorem}

\begin{document}
\title{Optimization-based Proof of Useful Work: Framework, Modeling, and
Security Analysis}
\author{Weihang~Cao,~\IEEEmembership{Graduate Student Member,~IEEE},
Xintong~Ling,~\IEEEmembership{Member,~IEEE}, \\
Jiaheng~Wang,~\IEEEmembership{Senior Member,~IEEE}, Xiqi~Gao,~\IEEEmembership{Fellow,~IEEE},
Zhi~Ding,~\IEEEmembership{Fellow,~IEEE} \thanks{The work of Xintong Ling, Jiaheng Wang, and Xiqi Gao was supported in part by the National Natural Science Foundation of China under Grants 62471137, U22B2006, and 72231002, the Natural Science Foundation on Frontier Leading Technology Basic Research Project of Jiangsu under Grants BK20222001 and BK20212001, the Jiangsu Province Major Science and Technology Project under Grant BG2024005, the Science and Technology Major Project of Nanjing under Grant 202405020, the Fundamental Research Funds for the Central Universities under Grant 2242022K60002, and the Young Elite Scientists Sponsorship Program By CAST 2023QNRC001. \textit{(Corresponding Author: Xintong Ling.)}\\
Weihang Cao is with the National Mobile Communications Research Laboratory,
Southeast University, Nanjing 211189, China (e-mail: whcao@seu.edu.cn).
\protect \\
Xintong Ling, Jiaheng Wang, and Xiqi Gao are with the National Mobile
Communications Research Laboratory, Southeast University, Nanjing
211189, China, also with the Purple Mountain Laboratories, Nanjing
210023, China (e-mail: xtling@seu.edu.cn; jhwang@seu.edu.cn; xqgao@seu.edu.cn).\protect \\
Zhi Ding is with the Department of Electrical and Computer Engineering,
University of California at Davis, Davis, CA 95616 USA (e-mail: zding@ucdavis.edu).\textit{ }}  \vspace{-0.8cm}}
\maketitle
\begin{abstract}
Proof of Work (PoW) has extensively served as the foundation of blockchain's
security, consistency, and tamper-resistance, but long has it
been criticized for its tremendous and inefficient utilization of
computational power and energy. Proof of useful work (PoUW) can effectively
address the blockchain's sustainability issue by redirecting the computing
power towards useful tasks instead of meaningless hash puzzles. Optimization
problems, whose solutions are often hard to find but easy to verify,
present a viable class of useful work for PoUW. However, most existing
studies rely on either specific problems or particular algorithms,
and there lacks comprehensive security analysis for optimization-based
PoUW. Therefore, in this work, we build a generic PoUW framework that
solves useful optimization problems for blockchain consensus. Through
modeling and analysis, we identify the security conditions against
both selfishness and maliciousness. Based on these conditions, we establish
a lower bound for the security overhead and uncover the trade-off
between useful work efficiency and PoW safeguard. We further offer
the reward function design guidelines to guarantee miners' integrity.
We also show that the optimization-based PoUW is secure in the presence
of malicious miners and derive a necessary condition against long-range
attacks. Finally, simulation results are presented to validate our
analytical results.
\end{abstract}

\begin{IEEEkeywords}
Blockchain, consensus mechanism, distributed optimization, security,
proof of useful work. 
\end{IEEEkeywords}

\section{Introduction }

 Blockchain has emerged as a disruptive technology that leads revolutionary
advances in multiple fields \cite{Ling2019,Chen2025,Ling2020c,Jiang2021}.
Blockchain assures integrity and reaches agreement among untrusted
entities, while maintaining high data security, consistency, and auditability \cite{Wu2026,Ling2025}.
The consensus protocol employed by a blockchain system lies at the
core of its functionality. Among the most representative consensus
protocols, proof of work (PoW) has been widely adopted in various
blockchain systems such as Bitcoin \cite{Nakamoto2008}, due to its
well-proven security and scalability performance. However, PoW has
been extensively criticized for its tremendous energy consumption
\cite{Wang2019}. In PoW, blockchain maintainers, or miners, generate
blocks by brute-forcing a cryptographic puzzle that is difficult to
solve but easy to verify. Block generation, also known as mining,
is thus highly computation-demanding and consumes a considerable amount
of computational power and electricity. Furthermore, the PoW puzzles
hold no practical meaning beyond maintaining blockchain consensus.
Therefore, the energy consumption in blockchain systems remains a
pressing challenge. 

A promising approach to address the power consumption issue is to
replace the useless PoW puzzles with some practical real-world puzzles
so that the invested computational power can be repurposed for useful
works. This type of consensus mechanism is generally referred to as
proof of useful work (PoUW) \cite{Ball2017}. The useful works explored
by early studies include DNA sequence alignment \cite{Ileri2016},
traveling salesman problems (TSP) \cite{Loe2018}, discrete logarithms
\cite{Hastings2018}, matrix multiplications \cite{Shoker2017}, etc.
However, some of them are not based on user-uploaded problems but
randomly generated problems, whose usefulness is thus questioned \cite{Loe2018,Hastings2018}.
Some others rely on trusted third parties or trusted hardware \cite{Ileri2016,Shoker2017},
which greatly jeopardizes the decentralization of blockchain and induces
more security risks.     

In addition to usefulness, a robust PoUW should also satisfy several
fundamental properties, including efficiency, completeness, soundness,
and hardness \cite{Ball2017}. Note that optimization problems are
aligned with these features. Combining optimization with consensus
mechanisms has thus been explored in several works. The work \cite{Ball2017}
proposed useful proof of work (uPoW) based on orthogonal vectors (OV)
problems. The authors of \cite{Oliver2017,Philippopoulos2020} proposed
a difficulty-based incentive for problem solving (DIPS) scheme by
reducing the difficulty constraint of PoW if the block contains a
solution better than the current best solution. Hybrid mining \cite{Chatterjee2019}
allows miners to add new blocks by either solving a submitted problem
or through traditional PoW hash trials, and problem-solving-based
blocks are announced in a two-step manner to protect the solutions.
Shibata \cite{Shibata2019} proposed proof-of-search that allows computational
power for making a consensus to be used for finding good approximate
solutions for optimization problems. Proof-of-search separates best
solution determination and block generation, and uses a modified hash-based
calculation process to find an intermediate solution that functions
similarly to the nonce field in PoW. However, in proof-of-search,
optimization problems are treated as black boxes whose structures
are unknown to miners, so that miners can only randomly guess an approximate
solution repeatedly, which significantly reduces computation efficiency.
Inspired by proof-of-search, \cite{Bizzaro2020} proposed proof of
evolution (PoE) by specifying the optimization method as a genetic
algorithm. In \cite{Davidovic2022,Todorovic2022}, a combinatorial
optimization consensus protocol (COCP) that solves combinatorial optimization
(CO) problems is proposed. In COCP, optimization instances are posted
by customers to form an instance pool, solved by the miners to create
blocks, and approved by verifiers to be added to the blockchain. However,
a malicious miner can directly copy other miners' solutions to compete
with the original solution, posing a significant threat to blockchain
security. In \cite{Fitzi2022}, a PoUW-based consensus protocol, Ofelimos,
is proposed built on a doubly parallel local search (DPLS) algorithm,
which repeatedly explores the neighborhood of a currently visited
location in the solution space to obtain an improved solution. Ofelimos
incorporates a pre-hash method that prevents grinding the search point,
together with a post-hash method that determines the validity of a
block similar to PoW. However, Ofelimos does not allow miners to choose
the initial point freely, which limits the optimization performance.
Recently, another important line of work is to reach consensus by
training machine learning (ML) models as an alternative to PoW puzzles
\cite{Baldominos2019,BravoMarquez2019,Chenli2019,Liu2021,Qu2021,Wang2022,Wei2023,Xu2024}.
Essentially, ML problems fall into the category of optimization, except
that they require an extra verification process based on some test
datasets.         

Despite various efforts, existing solutions still suffer from multiple
limits. First, most proposed consensus protocols are based on specific
problems, particular algorithms or even hardware, thereby limiting
their applications. For example, the problem is specified as TSP problems
in \cite{Loe2018}, and in Ofelimos \cite{Fitzi2022} the search algorithm
is specified. For another example, the paper \cite{Zhang2017} proposed
resource-efficient mining (REM) by employing Intel software guard
extensions (SGX) as the trusted executable environment (TEE). Though
the assumption of TEE to execute and remotely verify software running
assures low overhead in a partially decentralized network, its heavy
dependence on trusted hardware and centralized verification (which
is managed by Intel) leads to potential risks and limits its applications,
especially in permissionless blockchains. Second, most works lack
rigorous security analysis, which is essential for optimization-based
PoUW. Most papers only present their solutions and argue that they
can work without any theoretical analysis \cite{Oliver2017,Philippopoulos2020,Chatterjee2019,Shibata2019,Bizzaro2020,Davidovic2022,Todorovic2022},
making their security performance questionable. Third, it is important
to emphasize that optimization-based PoUW needs to consider additional
security issues such as solution plagiarism, which is often omitted
in existing papers. If solutions in existing blocks can be copied
and used to generate an alternative block easily, the security of
the blockchain would be greatly threatened. Though some papers discussed
solution plagiarism, their methods are either too complicated or require
specific algorithms, which restricts the optimization performance.
For instance, some papers use two steps to reveal the solutions \cite{Chatterjee2019,Chen2022},
i.e., miners upload the digest of their solutions in step one, and
reveal their solutions in step two to determine the winner. However,
this method has high synchronization requirements and is thus unsuitable
for distributed networks. Another method is to insert identity-related
puzzles in the search algorithms \cite{Bizzaro2020,Fitzi2022}. However,
this may lead to low optimization efficiency. It therefore calls for
a more generalized framework of optimization-based PoUW and the corresponding
theoretical characterization.  

To address the above challenges, this study formalizes a generic optimization-based
PoUW framework that integrates blockchain consensus with real-life
optimization problems, which is representative of a broad range of
optimization-based consensus protocols. Our framework serves dual
functions of solving useful optimization problems and maintaining
blockchain consensus simultaneously. We design a simple and intuitive
workflow to achieve these two objectives. Unlike existing works, our
framework imposes minimal requirements on the optimization problem
formulations and is not restricted to specific solving methods. Therefore,
our framework is generic and applicable to a diverse range of optimization
problems. We model the above general framework and conduct a thorough
security analysis for both selfishness and maliciousness. Specifically,
the major contributions of our work are summarized as follows.  
  
\begin{itemize}
\item We formalize a generic optimization-based PoUW framework that maintains blockchain consensus by solving optimization problems. Notably, our framework is an abstract of a wide category of optimization-based PoUWs.
\item We summarize the critical threats of optimization-based PoUW, establish the corresponding analytical model, and identify the security conditions against selfishness and maliciousness, respectively. 
\item We derive a lower bound for the security overhead and discuss the
trade-off between useful work efficiency and PoW safeguard. 
\item We investigate and uncover the reward function design principle, which
is equivalent to the security conditions against selfishness, and
further provide a linear reward design. 
\item We demonstrate that our framework is secure against malicious miners,
and derive a necessary condition against long-range attacks.
\item We conduct comprehensive simulations to demonstrate the performance
of our framework. Simulation results verify the security boundaries
derived from our theoretical analysis.
\end{itemize}
\par The remainder of the paper is organized as follows. Section
\ref{sec:Dual-functional-Consensus-Framew} demonstrates the framework,
and Section \ref{sec:Modeling} provides the system model. In Section
\ref{sec:Model-Analysis}, we analyze selfish miners and derive the
security conditions against selfishness. We show the security overhead
lower bound in Section \ref{sec:PoW-Safeguard-Versus}, and provide
the reward function design principle in Section \ref{sec:Reward-Function-Design}.
We further derive the security condition
against maliciousness in Section \ref{sec:Long-range-Attack-Analysis}.
The simulation is presented in Section \ref{sec:Simulations-and-Discussions}.
Finally, we conclude our work in Section \ref{sec:Conclusions}.

\section{Optimization-based PoUW \label{sec:Dual-functional-Consensus-Framew}}

\subsection{Overall Framework}

In this section, we formalize a generic optimization-based PoUW framework.
We consider a permissionless and anonymous blockchain network with
three types of participants: requesters, miners, and blockchain users.
Requesters possess optimization problems, referred to as tasks, and
are willing to remunerate the miners for solving them. Miners utilize
their computational power not only to solve the tasks, but also to
maintain the blockchain network through block generation and validation.
Blockchain users pay transaction fees or gas fees to miners for recording
their transactions on-chain or executing smart contracts. 

The optimization tasks uploaded by the requesters need to be defined
in the following form:
\begin{align}
\text{maximize} & \quad y=f\left(x\right)\\
\text{subject to} & \quad x\in\chi,\nonumber 
\end{align}
where $x$ is the optimization variable, $y=f\left(x\right)$ is the
optimization objective, and $\chi$ is the feasible set of the problem.
To initiate a task, the requester is required to lock some of its
credit in advance as the bounty for miners who provide good solutions
to the task. Each task contains a benchmark solution $\left(x^{(0)},y^{(0)}\right)$,
which serves as a baseline for the solutions submitted by miners.\textbf{
}
\begin{figure}
\begin{raggedright} \centering\includegraphics[width=0.9\columnwidth]{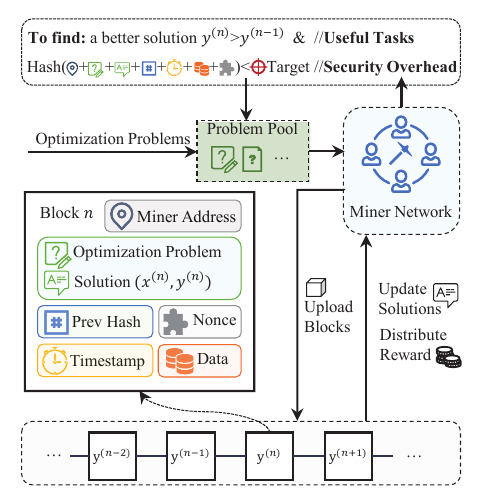}\end{raggedright}
\centering{}

\caption{A generic framework of optimization-based PoUW blockchain.\label{fig:architecture}}
\vspace{-0.5cm}
\end{figure}

The workflow of the optimization-based PoUW is illustrated in Fig.
\ref{fig:architecture}. The tasks generated by requesters form a
problem pool, similar to the transaction pool in cryptocurrencies.
The newly-arrived tasks are then organized into blocks and added to
the blockchain by miners. Miners solve the pending optimization tasks
to generate new blocks. When a miner successfully obtains a better
solution $\left(x^{(1)},y^{(1)}\right)$ than the current benchmark,
it integrates all the elements including the miner's address $addr$,
a reference to the optimization problem $op$, the better solution
$\left(x^{(1)},y^{(1)}\right)$, the hash value of the previous block
$h$, the current timestamp $ts$, a randomly chosen number called
the nonce, and the other payload data such as transactions and smart
contracts, to form a new potential block. After obtaining a better
solution $\left(x^{(1)},y^{(1)}\right)$, the miner needs to repeat
the above process until it finds a suitable nonce that makes the block
hash smaller than a certain threshold $T$, i.e.,
\begin{equation}
\text{Hash}\left(addr||op||\left(x^{(1)},y^{(1)}\right)||h||ts||\text{nonce}||\text{data}\right)<T.\label{eq:hash_trial}
\end{equation}
After finding a valid solution and a suitable nonce, the miner packages
all the information into a block and sends it to the blockchain network.
The hash trials in \eqref{eq:hash_trial} serve as the security overhead
that helps prevent the original solution from being plagiarized. Because
of \eqref{eq:hash_trial}, other miners cannot directly copy the solution
and claim a block of their own, but must perform hash trials until
a new nonce is found, as the input becomes different. It somehow provides
the original block with a certain temporal advantage and safeguards
the honest miners' interest. 

Upon receiving the mined block, other miners first validate the correctness
of the solution through the following steps:
\begin{enumerate}
\item Correctness: check whether $y^{(1)}=f\left(x^{(1)}\right)$.
\item Feasibility: determine whether $x^{(1)}$ satisfies the constraints,
i.e., whether $x^{(1)}\in\chi$.
\item Superiority: check whether $y^{(1)}>y^{(0)}$, i.e., the solution
is better.
\end{enumerate}
The time for solution verification is trivial compared with finding
a solution. Miners then confirm whether the block hash satisfies the
difficulty requirement. The newly received block is accepted if both
the solution and the hash puzzle are valid. Then, the benchmark of
the corresponding task is updated with the new solution, so the next
time the same task is employed for block generation, a solution better
than the new benchmark $y^{(1)}$ is required. After block validation,
other miners can update their current best solution with the new benchmark,
provided that it is superior, and proceed with optimization-solving
based on the updated solution. By this means, the solution keeps evolving
towards a better one. 

If a superior solution cannot be found, miners can generate a block
by PoW but with a much higher difficulty. The expected time to create
such a PoW block is much longer than that required to generate an
optimization-based block in the worst case, to encourage useful-work-based
mining when better solutions can be found. A task is automatically
terminated when a pure PoW block is mined, and the bounty will be
distributed to all the miners who have committed to the improvement
of the solution. After that, miners will switch to the next optimization
problem.   

Compared with existing consensus mechanisms relying on optimization
puzzles, our framework accepts general optimization problems as useful
work, while most existing works can only solve a specific type of
problem. By incorporating PoW as a security overhead, our framework
inherits the security features from PoW, which is more secure than
many existing approaches, especially those that fail to consider solution
plagiarism. However, security comes at the price of efficiency in
solving useful works, which will be discussed in Section \ref{sec:PoW-Safeguard-Versus}.

\subsection{Security Implications \label{subsec:2.2}}

Optimization-based PoUWs yield more security implications. This subsection
would like to first identify two primary selfish misbehavior in the
context of optimization-based PoUW. 

\textbf{Solution plagiarism}. Optimization-based PoUWs encounter an
inherent contradiction: the necessity to disclose solutions for verification
and fostering competitive cooperation exposes them to the risk of
plagiarism by dishonest miners. In our framework, the calculation
of PoW grants the miner a time advantage that helps secure the solution.
When a miner receives a new block and wants to duplicate the solution
directly to create a fork, it still needs to perform PoW hash trials
to claim block rewards, as the address in \eqref{eq:hash_trial} is
different. The PoW computation introduces certain latency, during
which the honest block has already been propagated to the entire network.
Despite the latency introduced by PoW, it remains insufficient to
completely prevent solution plagiarism, which requires in-depth mathematical
characterization. 

\textbf{Chain formation}. The solution quality is involved in optimization-based
PoUW, which influences miners' choices between forming a chain and
deliberately forking for higher profits. Intuitively, miners may prefer
linking to a preceding block with a poor solution, since a greater
improvement relative to a worse benchmark increases the probability
of finding a valid solution with potentially higher block reward.
If it is profitable to deliberately create forks, then PoUW will fail
to form a chain and cannot function properly, which should be carefully
addressed.

Apart from selfish misbehavior, malicious attacks also need to be
considered. We identify two typical malicious attacks in optimization-based
PoUW. 

\textbf{Tampering with the chain history.} To tamper with the blockchain
by the majority attack poses a fundamental challenge to all blockchains.
In our framework, after acquiring an improved optimization solution,
miners must find appropriate nonces to meet the PoW requirement. Therefore,
the framework naturally inherits the secure properties of PoW, necessitating
an honest majority. In the absence of a benign majority, there exists
the vulnerability of a powerful attacker manipulating the blockchain
data.

\textbf{Long-range Attack.} Though PoW helps secure original blocks
by inducing delays for plagiarists, it cannot directly prevent the
solutions from being stolen. If a malicious miner applies a long-range
attack, i.e., it tries to create a longer fork from deep in the chain,
it can directly copy the solutions for creating its private chain
and only solves the PoW part of the puzzles. This gains the attacker
an advantage in catching up with the main chain, making the honest
majority assumption insufficient. 

Note that, even the pure PoW protocol cannot completely avoid these threats like alternative history attacks \cite{Ling2020c}. Therefore, it is imperative to give the security conditions that can guarantee the system safety under these attacks. We will derive the corresponding results via rigorous modeling and mathematical analysis in the ensuing sections.

\section{PoUW Blockchain Modeling \label{sec:Modeling}}

\subsection{Network Model }

In this work, we apply a discrete-time network model \cite{Garay2015},
in which miners take actions stepwise according to fixed time intervals
called rounds. In one round, the number of hash trials of the whole
blockchain network is denoted by $q_{\text{total}}$, representing
the total computing power of the network. According to \eqref{eq:hash_trial},
a hash trial converts the elements in a block into a binary string
of length $L$ (e.g., $L=256$ in Bitcoin blockchain). Due to the
properties of the hash function, hash trials can be modeled as a sequence
of independent Bernoulli trials with success probability $\frac{T}{2^{L}}$,
where $T$ is the difficulty target as defined in Section II-A. Let
$q_{0}$ be the probability that the entire miner network obtains
a qualified hash value within a given round, provided that all computational
power is devoted to hash trials. Usually $\frac{T}{2^{L}}\ll1$, and
roughly, we have $q_{0}\approx\frac{q_{\text{total}}T}{2^{L}}$. It
is evident that $q_{0}$ is influenced by two key factors: the total
amount of computational power in the network and the threshold $T$. 

Similarly, we assume that in each round, for a given optimization
problem, the probability of obtaining a better solution is denoted
by $p_{0}$. Note that $p_{0}$ is determined by the computational
power of the network invested in optimization problem-solving, as
well as the inherent hardness of the optimization problem itself.
It is a general assumption that can be applied to various kinds of
optimization problems.   
\begin{figure*}
\begin{raggedright} \centering\includegraphics[width=0.88\textwidth]{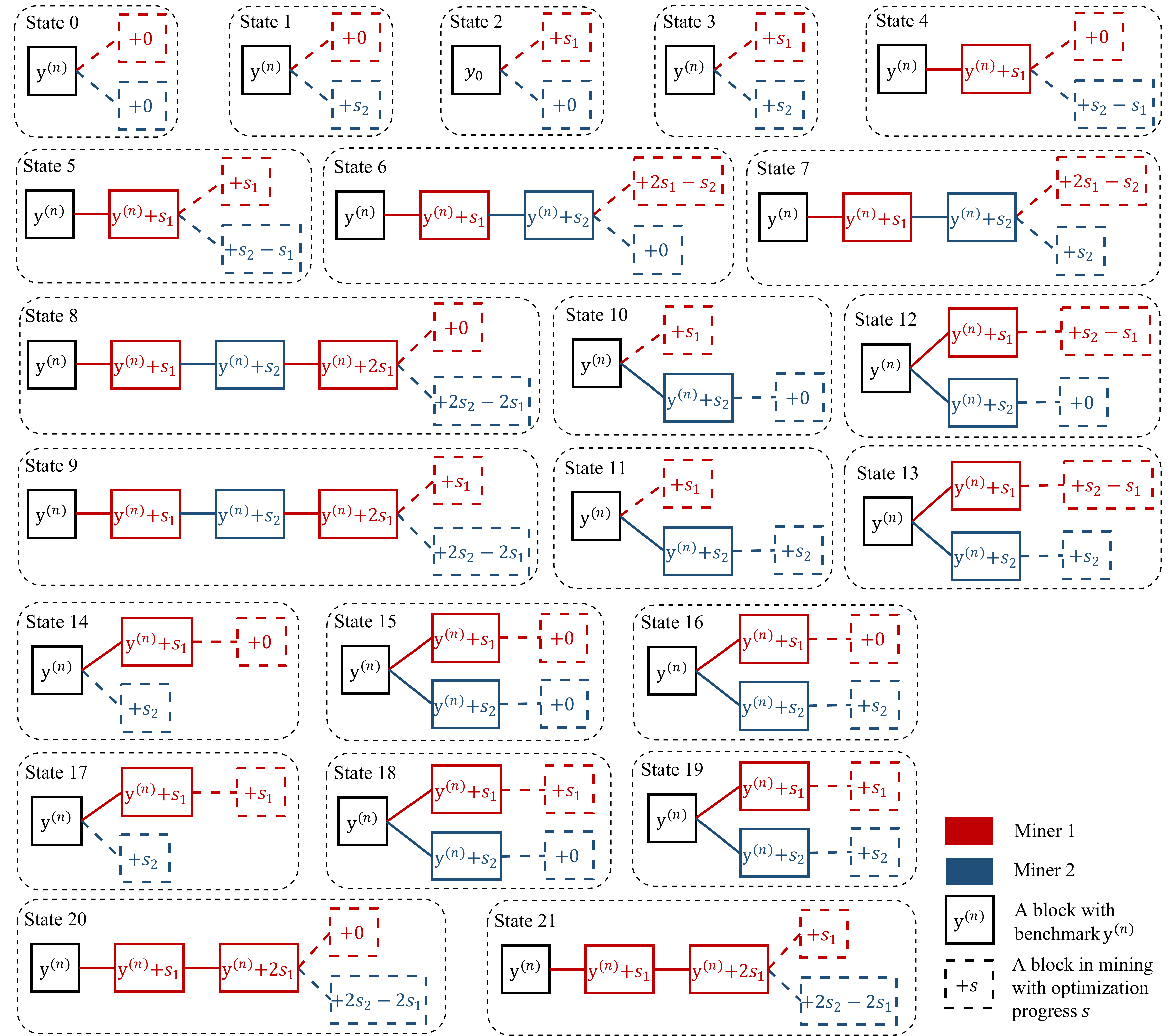}\end{raggedright}
\centering{}

\caption{Illustration of states for different strategy profiles.\label{fig:states}}
\vspace{-0.5cm}
\end{figure*}

\subsection{Honest Miner Model}

Now consider the miner $i$ that behaves honestly according to the
framework. We denote miner $i$'s ratio of computational power in
the blockchain network as $\lambda_{i}$. Specifically, in each round,
miner $i$ applies the following honest (H) strategy to generate a
new block:
\begin{itemize}
\item Denote the current best-known solution (benchmark) recorded in the
blockchain as $y^{(n)}$. Miner $i$ will keep optimizing the problem
until it finds a solution $y^{(n+1)}$ better than the benchmark $y^{(n)}$.
Denote the success probability of finding a better solution in each
round as $p_{i}$, and we assume that $p_{i}$ is proportional to
miner $i$'s computational power $\lambda_{i}$, i.e., $p_{i}=\lambda_{i}p_{0}$. 
\item According to our optimization-based PoUW framework, if miner $i$'s
current solution surpasses $y^{(n)}$, it then performs hash trials
to find a nonce that meets the threshold requirement. The success
probability of finding a suitable nonce in a round is $q_{i}$, and
obviously, $q_{i}=\lambda_{i}q_{0}$. In each round, a miner may attempt
to find a better solution than the current benchmark, or perform hash
trials if it has obtained a valid solution. (We assume that these
two options are exclusive in one round because the computational power
of each miner is limited and the round duration is short.)
\item After miner $i$ generates a valid block with a new solution $y^{(n+1)},$
it will obtain the block reward $R\left(s\right)$ according to the
reward function $R(\cdot)$, where $s$ represents the improvement
of the new solution $y^{(n+1)}$, compared with the previous benchmark
$y^{(n)}$. Apparently, the mining strategy is significantly influenced
by the reward function $R(\cdot)$, which thus should be designed
carefully and defined with the optimization task. 
\item The value of $s$ implies the effort of the miner devoted to finding
the new solution $y^{(n+1)}$, given the previous benchmark $y^{(n)}$.
The definition of $s$ can vary. One common approach is to directly
quantify the improvement as $s=y^{(n+1)}-y^{(n)}$. However, considering
the characteristics of most optimization algorithms, we can define
$s$ as the order of convergence, satisfying $\underset{n\rightarrow\infty}{\lim}\frac{|y^{(n+1)}-\bar{y}|}{|y^{(n)}-\bar{y}|^{s}}=\text{constant}$,
where $\bar{y}$ is the limiting point of the optimization algorithm.
Linear, quadratic, and superlinear convergence patterns correspond
to $s=1$, $s=2$, or $1<s<2$, respectively. In practice, we can
calculate the order of convergence $s$ from previous solutions: $s\approx\log\left(\frac{y^{(n+1)}-y^{(n)}}{y^{(n)}-y^{(n-1)}}\right)/\log\left(\frac{y^{(n)}-y^{(n-1)}}{y^{(n-1)}-y^{(n-2)}}\right)$
\cite{Senning2019}. In the following analysis, we assume that the
optimization capability $s$ for miner $i$ is nearly constant for
an optimization task. 
\item If another miner generates a new block and miner $i$ has not mined
one yet, then it would update its local chain together with the current
benchmark $y^{(n)}$. If the new benchmark surpasses the miner's current
solution, it will update its current best-known solution and attempt
to calculate a better one. Otherwise, it will carry out hash trials
after the new block in an attempt to generate the subsequent block. 
\end{itemize}

\subsection{Adversary Miner Model}

Two types of adversary miners need to be considered: the first type
is selfish but rational miners who may deviate from the protocol for
their interests; the second type is purely malicious miners who will
attack the blockchain even at their losses. The analysis of selfish
miners is particularly important in a PoUW blockchain, as the selfish
miners may intentionally fork and benefit from claiming the block
rewards. We first consider selfish miners, and in Section \ref{sec:Long-range-Attack-Analysis},
we will discuss malicious miners. 

Assume all miners apply the longest chain rule. That is, when the
miners' views differ (which, in our model, only occurs when a selfish
miner intentionally forks), if their chains are of the same length,
they will continue to mine on their own chain. If the selfish miner
creates a fork that is one block longer than the honest miners' chains,
then the honest miners will abandon their chain and adopt the longer
fork. Conversely, the selfish miner will give up forking if the honest
miners' chain becomes longer. We ignore more complex cases in which
the selfish miner competes for a longer time.  

Under these assumptions, it can be observed that the optimization
progress of miners can affect the dishonest strategies the selfish
miner may deploy. Consider a mining competition between two miners,
miner 1 and miner 2. We assume their optimization improvement $s_{1}<s_{2}$
without loss of generality.

When the selfish miner has a smaller optimization improvement, i.e.,
miner 1 is selfish, it may apply a strategy named fork-and-steal (FS)
as follows. When a new block is generated by the honest miner before
the selfish miner does, if the selfish miner has already acquired
a solution $y^{(n)}+s_{1}$, which is better than the previous benchmark
$y^{(n)}$ but smaller than the honest miner's new solution $y^{(n)}+s_{2}$,
instead of abandoning its solution and replacing it with the new best
solution, the selfish miner will do the following: it secretly mines
a block with solution $y^{(n)}+s_{1}$, then steals the solution $y^{(n)}+s_{2}$
and mines another block. If the selfish miner succeeds before the
honest miner generates the next block, it will announce the fork,
which is one block longer than the honest miner's single block. Due
to the longest chain, this fork is accepted by the main chain and
the honest miner's new block will be replaced. If the honest miner
generates the next block before the selfish miner successfully creates
two blocks, the forking attempt is considered failed, and the selfish
miner will update its solution and restart honest mining. 

Also, when the selfish miner has a more significant optimization progress,
i.e., miner 2 is selfish, a strategy named ignore-and-fork (IF) may
be utilized. When the honest miner generates a block with solution
$y^{(n)}+s_{1}$, if the selfish miner already has solution $y^{(n)}+s_{2}$,
which is better, it may secretly mine a fork parallel to the honest
miner's newly found block. As long as the selfish miner can generate
two blocks, i.e., with solution $y^{(n)}+s_{2}$ and $y^{(n)}+2s_{2}$,
respectively, before the honest miner generates the next block, then
it will make its secret chain public, which will be accepted as the
main chain since it is one block longer. If it fails, the selfish
miner will return to honest mining. 

Notably, the FS and IF strategies correspond to the solution plagiarism
and the chain formation misbehavior, respectively, as discussed in
Section \ref{subsec:2.2}. These two selfish strategies are typical,
as they only start acting selfishly when the selfish miner already
has some advantage, i.e., has an unused solution. Therefore, these
two strategies are the primary concern of our security analysis against
selfishness.

A blockchain network has massive miners and might have multiple selfish
miners that attempt to maximize their own interests. Now we use a
stronger assumption that these selfish miners are colluded so that
the whole blockchain network can be abstracted into two parties: the
honest majority and the selfish rest. The honest one always abides
by the designed framework, whereas the selfish one may not follow
the protocol for a better payoff. We denote the ratios of the computational
power of these two parties as $\lambda_{1}$ and $\lambda_{2}$, respectively,
and obviously $\lambda_{1}+\lambda_{2}=1$. We also assume that the
optimization improvements of these two parties are on par with each
other, such that $s_{1}<s_{2}<2s_{1}$, which implies that the optimization
algorithms applied by both parties do not differ too much in performance.

The above modeling serves as a foundation for the analysis of selfishness.
Intuitively, if deviating from honest mining lowers mining rewards,
then honest mining would be a stable strategy in the considered setup.
It guarantees rational miners follow the framework and behave honestly,
thus a prerequisite for a secure optimization-based PoUW. The next
section will investigate the average rewards for different strategies.
\begin{figure}
\begin{raggedright} \centering\subfigure[]{ \includegraphics[width=0.95\columnwidth]{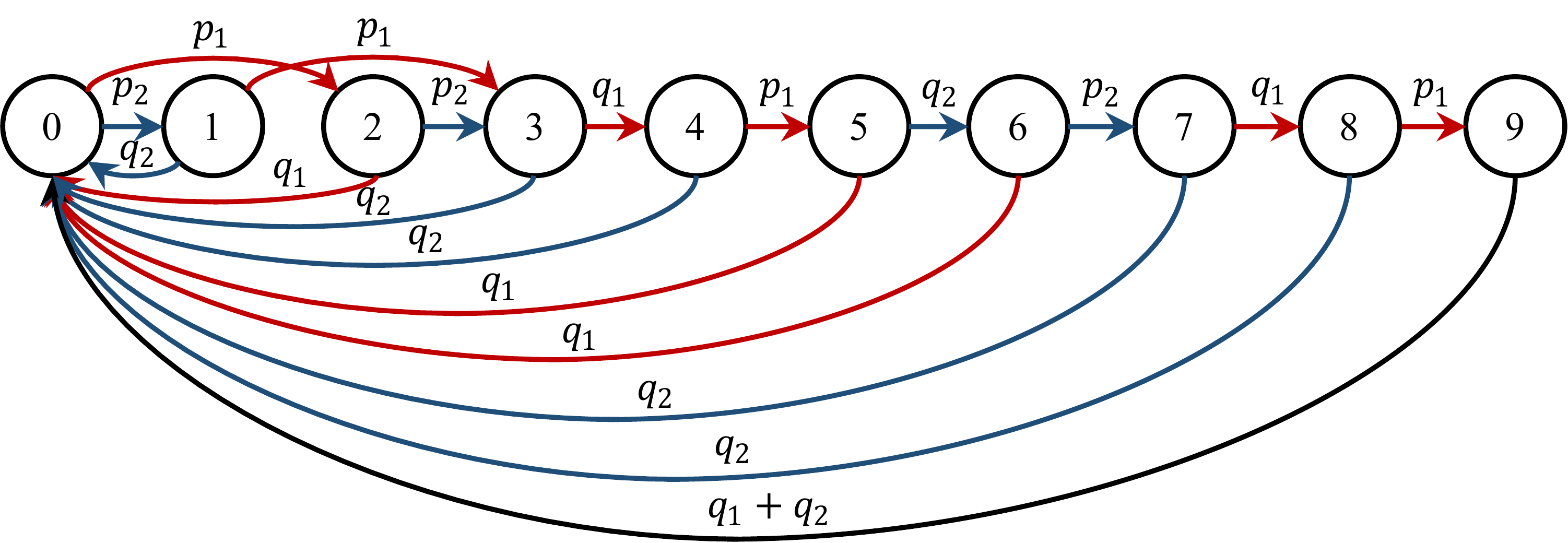}}
\hfill{}\subfigure[]{ \includegraphics[width=0.95\columnwidth]{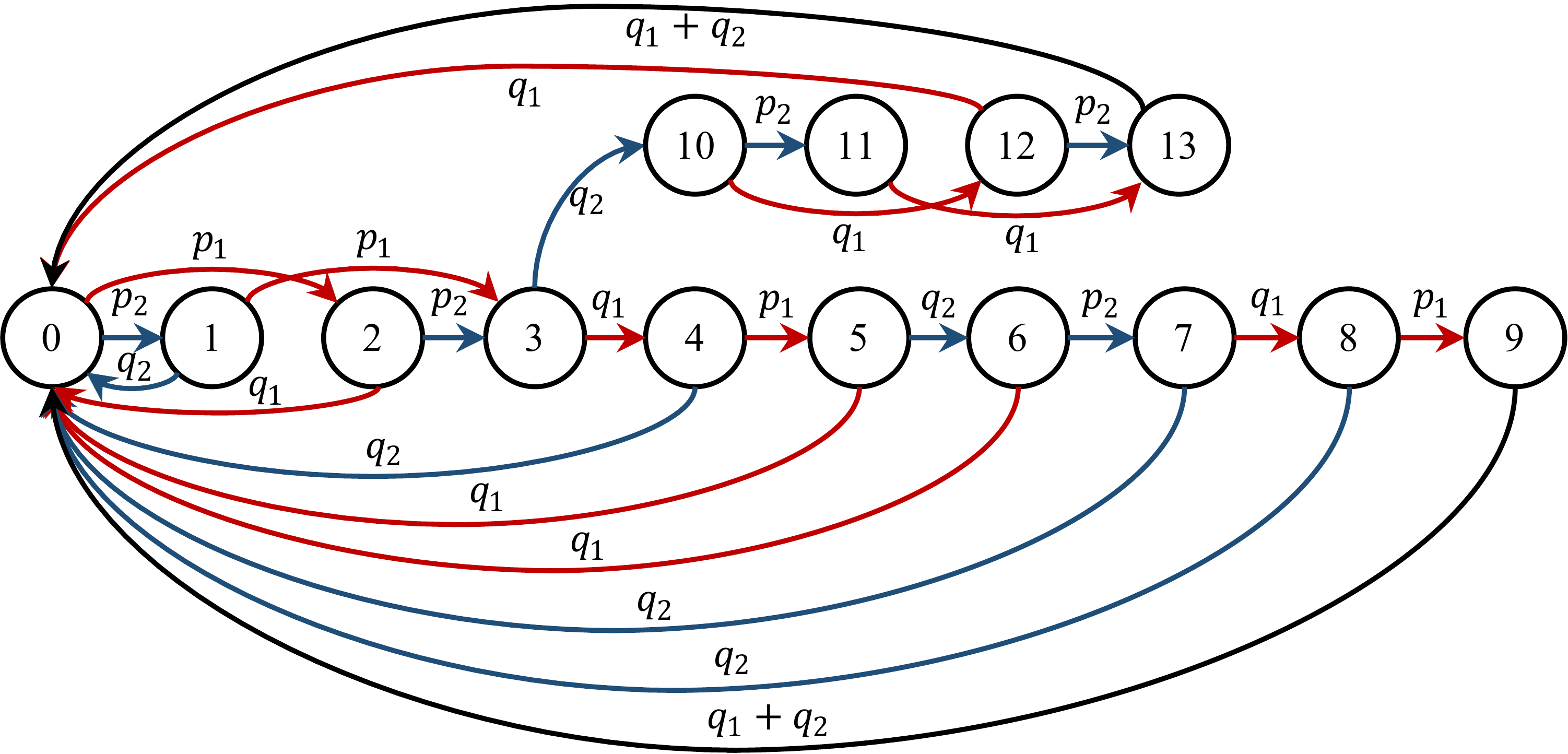}}
\hfill{}\subfigure[]{ \includegraphics[width=0.95\columnwidth]{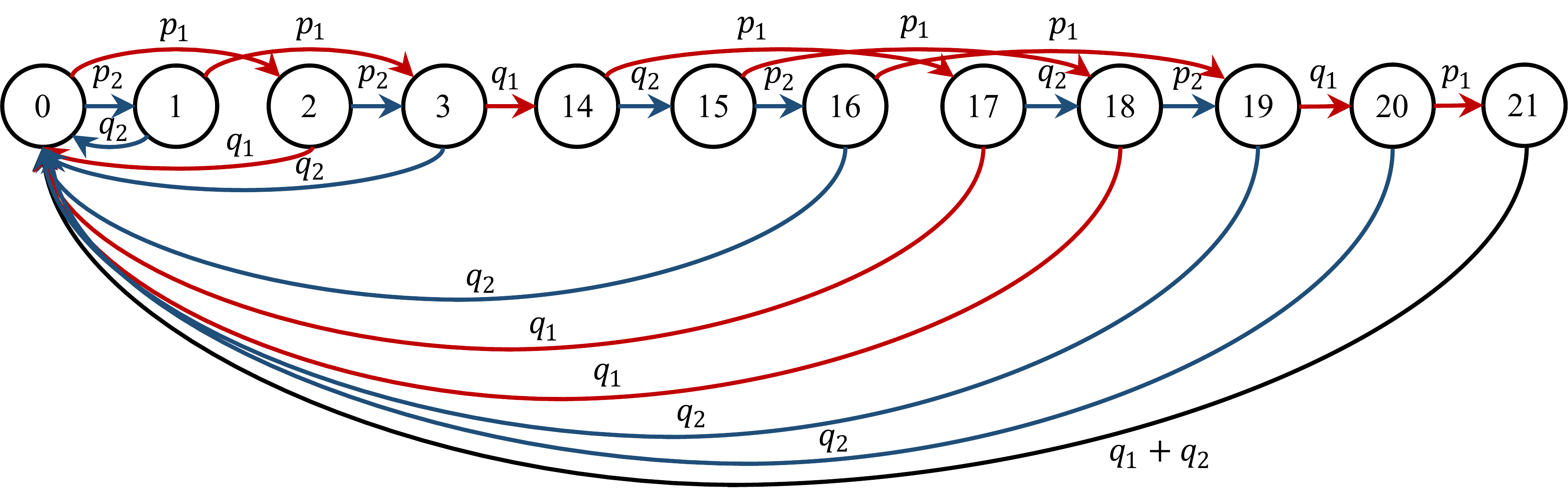}}

\end{raggedright} \centering {}\vspace{-0.2cm}

\caption{The state transition diagram for different strategy profiles. The
numbers in circles represent the corresponding states shown in Fig.
\ref{fig:states}. (a) (H, H). (b) (FS, H). (c) (H, IF). \label{fig:queue}}
\vspace{-0.5cm}
\end{figure}

\section{Security Against Selfishness \label{sec:Model-Analysis}}

In this section, we derive the payoff for different strategy profiles
and explore the conditions of guaranteeing honest mining. Based on
the above model, we investigate whether the strategy profile (H, H)
in which both miners adopt the honest strategy constitutes a Nash
equilibrium, where any deviation from the current state decreases
the miner's expected payoff. If (H, H) is a Nash equilibrium, both
miners are expected to act honestly to maximize their overall gain,
thereby ensuring the security of optimization-based PoUW. Therefore,
we aim to identify the security conditions against selfishness, as
shown in Definition \ref{def:1}. 
\begin{defn}
$\pi_{i}\left(\mathcal{\sigma}_{1},\mathcal{\sigma}_{2}\right)$ is
denoted as the expected payoff per round for miner $i$ ($i=1,2$),
where $\mathcal{\sigma}_{1}\in\left\{ \text{H},\text{FS}\right\} $
and $\mathcal{\sigma}_{2}\in\left\{ \text{H},\text{IF}\right\} $. 
\end{defn}
\begin{defn}[Security Conditions against Selfishness]
\label{def:1}  The strategy profile (H, H) is a Nash equilibrium
under the following constraints:
\begin{align}
\pi_{1}\left(\text{H},\text{H}\right) & >\pi_{1}\left(\text{FS},\text{H}\right),\label{eq:security_cond1}\\
\pi_{2}\left(\text{H},\text{H}\right) & >\pi_{2}\left(\text{H},\text{IF}\right).\label{eq:security_cond2}
\end{align}
\eqref{eq:security_cond1} and \eqref{eq:security_cond2} are thus
defined as security conditions against selfishness, which can guarantee
that honest mining yields a higher profit than selfish strategies
if others behave honestly.
\end{defn}
Definition \ref{def:1} is the basis of the subsequent analysis. In
the following subsections, we model the state transition of the different
strategy profiles ((H, H), (FS, H) and (H, IF)) and derive the expected
payoffs. 

\subsection{Honest Strategy}

We apply discrete Markov chains, in which the transition probability
only depends on the previous state, and at most one state transition
can occur in a round. We first identify the possible states, and then
derive the steady-state distribution, which facilitates the calculation
of the expected reward.  

For the strategy profile (H, H), both miners compete honestly to generate
the next block. The state transition diagram is demonstrated in Fig.
\ref{fig:queue} (a), and the meaning of each state is shown in Fig.
\ref{fig:states}. Specifically, state 0 represents the state where
the network is consistent, and all the miners do not have a better
solution. According to Fig. \ref{fig:queue} (a), the state transition
probability from state $i$ to state $j$, denoted as $p_{i,j}$,
can be given by

\[
p_{i,j}=\begin{cases}
p_{1}, & (i,j)=(0,2),(1,3),(4,5),(8,9),\\
q_{1}, & i=2,5,6,j=0\text{ or }\\
 & (i,j)=(3,4),(7,8),\\
p_{2}, & (i,j)=(0,1),(2,3),(6,7),\\
q_{2}, & i=1,3,4,7,8,j=0\text{ or }\\
 & (i,j)=(5,6),\\
q_{1}+q_{2}, & i=9,j=0,\\
1-\sum\limits_{\substack{j=0\\j\ne i}}^{9} p_{i,j}, & i=j,\\
0, & \text{otherwise.}
\end{cases}
\]
Practically, the ratio of $s_{1}$ and $s_{2}$may not be a rational
number. The number of possible states may be infinite, since there
always exists a possible case that the opponent's current solution
is still better and remains viable, and the system will not return
to state 0. Here, we make an approximation by truncating these states
and letting state 9 only return to state 0, since the steady-state
probabilities of the states beyond state 9 are negligible. We verify
the effectiveness of the above approximation from the experiments
in Section \ref{sec:Simulations-and-Discussions}. Remark that, in
the special case where $s_{1}:s_{2}=2:3$, miners reach a tie at state
9 with no further states.       

From Fig. \ref{fig:states}, one can observe that, miner 1 can mine
a block with probability $q_{1}$ from state 2, 3, 5, or 9, with optimization
progress $s_{1}$, or from state 6, or 7, with optimization progress
$2s_{1}-s_{2}$. Hence, the expected payoff per round of the miner
1 under the strategy profile (H, H) can be given by
\begin{align}
\pi_{1}\left(\text{H},\text{H}\right) & =q_{1}\left(w_{2}^{H}+w_{3}^{H}+w_{5}^{H}+w_{9}^{H}\right)R\left(s_{1}\right)\label{eq:reward_1h}\\
 & +q_{1}\left(w_{6}^{H}+w_{7}^{H}\right)R\left(2s_{1}-s_{2}\right),\nonumber 
\end{align}
in which $w_{i}^{H}$ represents the steady-state probability for
state $i$ under the strategy profile (H, H). On the other hand, the
expected payoff per round of miner 2 is
\begin{align}
\pi_{2}\left(\text{H},\text{H}\right) & =q_{2}\left(w_{1}^{H}+w_{3}^{H}+w_{7}^{H}\right)R\left(s_{2}\right)\label{eq:reward_2h}\\
 & +q_{2}\left(w_{4}^{H}+w_{5}^{H}\right)R\left(s_{2}-s_{1}\right)\nonumber \\
 & +q_{2}\left(w_{8}^{H}+w_{9}^{H}\right)R\left(2s_{2}-2s_{1}\right).\nonumber 
\end{align}
The expressions of $w_{i}^{H}$ are derived by solving the steady-state
equations from Fig. \ref{fig:queue} (a). To show the solution compactly,
we introduce $v_{i}\triangleq w_{i}^{H}/w_{0}^{H}$ as the relative
value of state $i$ compared with state 0, where the superscript of
$v_{i}$ is dropped since the relative value of every state $i$ is
unique for different strategy profiles. The steady-state probabilities
for other strategy profiles can be derived similarly. We give the
expressions of $v_{i}$ for all the strategy profiles as follows: 

\footnotesize
\begin{align*}
v_{0} & =1,\,\thinspace v_{1}=\frac{p_{2}}{p_{1}+q_{2}}, & v_{2} & =\frac{p_{1}}{q_{1}+p_{2}}, & v_{3} & =\frac{p_{1}v_{1}+p_{2}v_{2}}{q_{1}+q_{2}},\\
v_{4} & =\frac{q_{1}}{p_{1}+q_{2}}v_{3}, & v_{5} & =\frac{p_{1}}{q_{1}+q_{2}}v_{4}, & v_{6} & =\frac{q_{2}}{q_{1}+p_{2}}v_{5},\\
v_{7} & =\frac{p_{2}}{q_{1}+q_{2}}v_{6}, & v_{8} & =\frac{q_{1}}{p_{1}+q_{2}}v_{7}, & v_{9} & =\frac{p_{1}}{q_{1}+q_{2}}v_{8},\\
v_{10} & =\frac{q_{2}}{q_{1}+p_{2}}v_{3}, & v_{11} & =\frac{p_{2}}{q_{1}+q_{2}}v_{10}, & v_{12} & =\frac{q_{1}}{q_{1}+p_{2}}v_{10},\\
v_{13} & =\frac{q_{1}v_{11}+p_{2}v_{12}}{q_{1}+q_{2}}, & v_{14} & =\frac{q_{1}}{p_{1}+q_{2}}v_{3}, & v_{15} & =\frac{q_{2}}{p_{1}+p_{2}}v_{14},\\
v_{16} & =\frac{p_{2}}{p_{1}+q_{2}}v_{15}, & v_{17} & =\frac{p_{1}}{q_{1}+q_{2}}v_{14}, & v_{18} & =\frac{p_{1}v_{15}+q_{2}v_{17}}{q_{1}+p_{2}},\\
v_{19} & =\frac{p_{1}v_{16}+p_{2}v_{18}}{q_{1}+q_{2}}, & v_{20} & =\frac{q_{1}}{p_{1}+q_{2}}v_{19}, & v_{21} & =\frac{p_{1}}{q_{1}+q_{2}}v_{20}.
\end{align*}
\normalsize With $v_{i}$, $w_{i}^{H}$ can be given by
\[
w_{i}^{H}=\frac{v_{i}}{\sum_{j=0}^{9}v_{j}}.
\]

\subsection{Selfish Strategies}

\subsubsection{Fork-and-steal}

Now consider the strategy profile (FS, H), in which miner 1 is selfish.
The states 0-9 are the same as those in Fig. \ref{fig:queue} (a),
while states 10-13 are introduced due to the selfish miner's misbehavior.
The state transition diagram is depicted in Fig. \ref{fig:queue}
(b). When the honest miner generates a block at state 3, instead of
returning to state 0 as honest does, the FS strategy will start an
attack by shifting to state 10. \footnote{When the selfish miner succeeds at state 13, the current solution
of the honest miner is still better and usable. To simplify Markov
chain analysis, we consider a ``dumber'' honest miner by letting
state 13 only return to state 0. Note that it would slightly power
the selfish strategy.} When the selfish miner mines a block at state 12 or 13, the FS strategy
succeeds, and the selfish miner will receive two block rewards, i.e.,
$R\left(s_{1}\right)+R\left(s_{2}-s_{1}\right)$. Hence, the expected
payoff per round of the selfish miner equals
\begin{align}
\pi_{1}\left(\text{FS},\text{H}\right) & =q_{1}\left(w_{2}^{FS}+w_{3}^{FS}+w_{5}^{FS}+w_{9}^{FS}\right)R\left(s_{1}\right)\label{eq:reward_1f}\\
 & +q_{1}\left(w_{6}^{FS}+w_{7}^{FS}\right)R\left(2s_{1}-s_{2}\right)\nonumber \\
 & +q_{1}\left(w_{12}^{FS}+w_{13}^{FS}\right)\left(R\left(s_{2}-s_{1}\right)+R\left(s_{1}\right)\right),\nonumber 
\end{align}
in which $w_{i}^{FS}$ represents the steady-state probability for
state $i$ under the strategy profile (FS, H).

\subsubsection{Ignore-and-fork}

When miner 2 is selfish, consider the strategy profile (H, IF). The
state transitions are shown in Fig. \ref{fig:queue} (c). It can be
viewed that, when the honest miner generates the next block at state
3, the selfish miner will apply the IF strategy and switch to state
14. When the selfish miner fails to generate the second block before
the honest miner at state 19, the optimization result is still valid,
and the system will switch to state 20 instead of state 0. The expected
payoff per round of the selfish miner is
\begin{align}
\pi_{2}\left(\text{H},\text{IF}\right) & =q_{2}\left(w_{1}^{IF}+w_{3}^{IF}+2w_{16}^{IF}+2w_{19}^{IF}\right)R\left(s_{2}\right)\nonumber \\
 & +q_{2}\left(w_{20}^{IF}+w_{21}^{IF}\right)R\left(2s_{2}-2s_{1}\right),\label{eq:reward_2f}
\end{align}
in which $w_{i}^{IF}$ represents the steady-state probability for
state $i$ under the strategy profile (H, IF).      

\subsection{Security Conditions against Selfishness}

Now we would like to derive the security conditions against selfish
miners in our optimization-based PoUW framework. If the strategy profile
(H,H) is a Nash equilibrium, it implies that selfish mining strategies
yield a lower profit, and thus all the miners, even selfish, will
mine honestly for their best interests. Consequently, such security
conditions serve as a foundation for a benign PoUW. We summarize the
conditions as follows. 
\begin{thm}[Security Conditions against Selfishness]
\label{thm:security-conditions} The strategy profile (H, H) is a
Nash equilibrium under the following conditions:
\begin{align}
\alpha_{1}R\left(s_{1}\right)+\beta_{1}R\left(2s_{1}-s_{2}\right)-\gamma_{1}R\left(s_{2}-s_{1}\right) & >0,\label{eq:condv1}\\
\alpha_{2}R\left(s_{2}\right)+\beta_{2}R\left(s_{2}-s_{1}\right)+\gamma_{2}R\left(2s_{2}-2s_{1}\right) & >0,\label{eq:condv2}
\end{align}
in which 
\begin{align*}
\alpha_{1} & =\sum_{i=10}^{13}v_{i}\left(v_{2}+v_{3}+v_{5}+v_{9}\right)-\sum_{i=0}^{9}v_{i}\left(v_{12}+v_{13}\right),\\
\beta_{1} & =\sum_{i=10}^{13}v_{i}\left(v_{6}+v_{7}\right),\quad\gamma_{1}=\sum_{i=0}^{9}v_{i}\left(v_{12}+v_{13}\right),\\
\alpha_{2} & =\left(\sum_{i=14}^{21}v_{i}-\sum_{i=4}^{9}v_{i}\right)\left(v_{1}+v_{3}\right)+\left(\sum_{i=0}^{3}v_{i}+\sum_{i=14}^{21}v_{i}\right)v_{7}\\
 & -2\left(\sum_{i=0}^{3}v_{i}+\sum_{i=4}^{9}v_{i}\right)\left(v_{16}+v_{19}\right),\\
\beta_{2} & =\left(\sum_{i=0}^{3}v_{i}+\sum_{i=14}^{21}v_{i}\right)\left(v_{4}+v_{5}\right),\\
\gamma_{2} & =\left(\sum_{i=0}^{3}v_{i}+\sum_{i=14}^{21}v_{i}\right)\left(v_{8}+v_{9}\right)\\
 & -\left(\sum_{i=0}^{3}v_{i}+\sum_{i=4}^{9}v_{i}\right)\left(v_{20}+v_{21}\right).
\end{align*}
\end{thm}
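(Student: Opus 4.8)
The plan is to reduce each inequality in Definition~\ref{def:1} to the corresponding coefficient condition by substituting the payoff formulas \eqref{eq:reward_1h}--\eqref{eq:reward_2f}, rewriting every steady-state probability in terms of the relative values $v_i$, and then clearing the (strictly positive) normalizing constants. The first thing I would do is pin down the three state spaces and their normalizations from Fig.~\ref{fig:queue}. The profile (H,H) lives on states $0$--$9$, so $w_i^{H}=v_i/Z_H$ with $Z_H=\sum_{j=0}^{9}v_j$, as already stated. Under (FS,H) the honest cascade $4$--$9$ is still reachable but state~$3$ additionally feeds the attack states $10$--$13$, so $w_i^{FS}=v_i/Z_{FS}$ with $Z_{FS}=\sum_{j=0}^{13}v_j$. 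Under (H,IF) the honest block produced at state~$3$ is immediately contested, so states $4$--$9$ never occur and are replaced by $14$--$21$, giving $w_i^{IF}=v_i/Z_{IF}$ with $Z_{IF}=\sum_{j=0}^{3}v_j+\sum_{j=14}^{21}v_j$. The $v_i$ relations listed in the excerpt follow from the global balance equation at each state of the (essentially feed-forward, rooted-at-$0$) chains, which expresses a state's relative weight as its incoming probability flow divided by its total outflow; I would cite this derivation and verify one or two representative equations ($v_3$, $v_{10}$) for the modified chains.

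For \eqref{eq:security_cond1} I would substitute $w_i^{H}$ and $w_i^{FS}$ into $\pi_1(\text{H},\text{H})>\pi_1(\text{FS},\text{H})$, cancel the common factor $q_1>0$, and multiply through by $Z_H Z_{FS}>0$. Grouping the result by $R(s_1)$, $R(2s_1-s_2)$, $R(s_2-s_1)$ and using $Z_{FS}-Z_H=\sum_{i=10}^{13}v_i$: the $R(2s_1-s_2)$ coefficient is $(Z_{FS}-Z_H)(v_6+v_7)=\beta_1$, the $R(s_2-s_1)$ coefficient is $-Z_H(v_{12}+v_{13})=-\gamma_1$, and the $R(s_1)$ coefficient simplifies to $(Z_{FS}-Z_H)(v_2+v_3+v_5+v_9)-Z_H(v_{12}+v_{13})=\alpha_1$, yielding \eqref{eq:condv1}.

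For \eqref{eq:security_cond2} the procedure is identical: substitute $w_i^{H}$ and $w_i^{IF}$ into $\pi_2(\text{H},\text{H})>\pi_2(\text{H},\text{IF})$, cancel $q_2>0$, multiply by $Z_H Z_{IF}>0$, and group by $R(s_2)$, $R(s_2-s_1)$, $R(2s_2-2s_1)$, now writing $Z_H=\sum_{i=0}^{3}v_i+\sum_{i=4}^{9}v_i$, $Z_{IF}=\sum_{i=0}^{3}v_i+\sum_{i=14}^{21}v_i$ and $Z_{IF}-Z_H=\sum_{i=14}^{21}v_i-\sum_{i=4}^{9}v_i$. The $R(s_2-s_1)$ coefficient is $Z_{IF}(v_4+v_5)=\beta_2$; the $R(2s_2-2s_1)$ coefficient is $Z_{IF}(v_8+v_9)-Z_H(v_{20}+v_{21})=\gamma_2$; and the $R(s_2)$ coefficient, after separating the $(v_1+v_3)$, $v_7$ and $(v_{16}+v_{19})$ contributions (with the factor $2$ on the last coming from the double reward collected at states $16$ and $19$), becomes $(Z_{IF}-Z_H)(v_1+v_3)+Z_{IF}v_7-2Z_H(v_{16}+v_{19})=\alpha_2$, giving \eqref{eq:condv2}. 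Since the chains are irreducible all $v_i>0$, so the normalizing constants are strictly positive and every multiplication preserves the inequality direction; the steps are reversible, so \eqref{eq:condv1}--\eqref{eq:condv2} are equivalent to \eqref{eq:security_cond1}--\eqref{eq:security_cond2} and hence to (H,H) being a Nash equilibrium.

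There is no deep step here; the proof is bookkeeping. The main obstacles will be (i) correctly identifying which states survive, disappear, or are added in the two deviating profiles, since a wrong state space corrupts every normalizing sum and therefore every coefficient, and (ii) carrying out the rearrangement of $\alpha_2$ carefully, because its three blocks of $v$-sums mix $Z_H$, $Z_{IF}$, and $Z_{IF}-Z_H$ factors and the "$2$" has to be tracked through the cancellation of $q_2$ and $Z_H Z_{IF}$.
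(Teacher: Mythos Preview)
Your proposal is correct and follows exactly the approach the paper indicates: substitute \eqref{eq:reward_1h}--\eqref{eq:reward_2f} into \eqref{eq:security_cond1}--\eqref{eq:security_cond2}, express the steady-state probabilities via the common relative values $v_i$ with the appropriate normalizations $Z_H$, $Z_{FS}$, $Z_{IF}$, clear the positive factors $q_1 Z_H Z_{FS}$ and $q_2 Z_H Z_{IF}$, and regroup. The paper's own proof is a single sentence (``derived by substituting \eqref{eq:reward_1h}--\eqref{eq:reward_2f} into \eqref{eq:security_cond1} and \eqref{eq:security_cond2}''), so you have simply written out the bookkeeping it leaves implicit, including the correct identification of the state spaces ($0$--$13$ for (FS,H) and $\{0,\dots,3\}\cup\{14,\dots,21\}$ for (H,IF)) that makes the coefficient formulas come out as stated.
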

\begin{proof}
Theorem \ref{thm:security-conditions} is derived by substituting
\eqref{eq:reward_1h}-\eqref{eq:reward_2f} into \eqref{eq:security_cond1}
and \eqref{eq:security_cond2}. 
\end{proof}
Theorem \ref{thm:security-conditions} provides the condition under
which miners tend to comply with the optimization-based PoUW framework.
If \eqref{eq:condv1} and \eqref{eq:condv2} are met, selfish strategies
will yield lower profits than conforming to the framework. Therefore,
a selfish miner is incentivized to be honest, which implies a secure
PoUW. Since $v_{i}$ can be derived recursively, all the coefficients
can be calculated easily. In addition, \eqref{eq:condv1} and \eqref{eq:condv2}
implicitly impose certain requirements on the block reward function
$R\left(s\right)$, which will be elaborated in Section \ref{sec:Reward-Function-Design}. 

\section{Security Overhead Bound \label{sec:PoW-Safeguard-Versus}}

In our framework, PoW not only provides temporal advantages for original
solutions, safeguarding them from being plagiarized, but also ensures
the security of optimization-based PoUW. Nevertheless, it is an additional
overhead that decreases optimization efficiency. This section tries
to qualify the trade-off between security overhead protection and
useful work efficiency. To characterize the ratio of computational
power spent for preserving blockchain security, we introduce the concept
of security overhead ratio, denoted as
\begin{equation}
\eta=\frac{p_{0}}{p_{0}+q_{0}}.\label{eq:eta}
\end{equation}
A smaller $\eta$ indicates that more computational power is utilized
for useful work. However, the security of the blockchain system is
also reduced. Conversely, as $\eta$ grows larger, it offers better
security features. Unfortunately, this leads to less effort spent
on useful work. 

Note that, although Theorem \ref{thm:security-conditions} comprehensively
shows the security conditions, it cannot be used straightforwardly
since it is too complex. Intuitively, the security overhead ratio
$\eta$ is lower bounded to preserve good security properties. Now
we would like to derive a lower bound on the security overhead ratio
$\eta$ via the following Theorem. 
\begin{thm}[Security Overhead Bound]
\label{thm:eta>1/2}  To reach the Nash equilibrium in Theorem \ref{thm:security-conditions},
the security overhead ratio $\eta$ is lower bounded by
\begin{equation}
\eta>\frac{1}{2},\label{eq:1/2}
\end{equation}
i.e., $p_{0}>q_{0}$ must be satisfied. 
\end{thm}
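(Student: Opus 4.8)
The plan is to establish the bound by contraposition. Suppose the overhead is insufficient, i.e.\ $\eta\le\tfrac12$, which by \eqref{eq:eta} is equivalent to $q_0\ge p_0$; I will show that conditions \eqref{eq:condv1} and \eqref{eq:condv2} of Theorem \ref{thm:security-conditions} cannot hold simultaneously, so (H, H) fails to be a Nash equilibrium and the claimed bound follows. The conceptual key is that both conditions constrain the \emph{shared} reward term $R(s_2-s_1)$ in opposite directions: in \eqref{eq:condv1} it enters with the negative weight $-\gamma_1$ (the extra payoff a fork-and-steal plagiarist collects), whereas in \eqref{eq:condv2} it enters with the positive weight $\beta_2$. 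Since every $v_i$ is a ratio of steady-state probabilities and hence strictly positive, $\gamma_1=\bigl(\sum_{i=0}^{9}v_i\bigr)(v_{12}+v_{13})>0$ and $\beta_2=\bigl(\sum_{i=0}^{3}v_i+\sum_{i=14}^{21}v_i\bigr)(v_4+v_5)>0$, so I can isolate $R(s_2-s_1)$ to read off an upper bound from \eqref{eq:condv1} and a lower bound from \eqref{eq:condv2}; a Nash equilibrium requires the lower bound not to exceed the upper one.

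First I would substitute the honest-miner relations $p_i=\lambda_i p_0$ and $q_i=\lambda_i q_0$ (with $\lambda_1+\lambda_2=1$) into the recursions for $v_0,\dots,v_{21}$, so that each coefficient $\alpha_1,\beta_1,\gamma_1,\alpha_2,\beta_2,\gamma_2$ becomes an explicit function of $\lambda_1$ and the ratio $\rho\triangleq q_0/p_0$, where the hypothesis gives $\rho\ge1$. Clearing the two positive denominators $\gamma_1,\beta_2$ collapses the consistency requirement into the single inequality
\begin{equation*}
\beta_2\alpha_1 R(s_1)+\beta_2\beta_1 R(2s_1-s_2)+\gamma_1\alpha_2 R(s_2)+\gamma_1\gamma_2 R(2s_2-2s_1)>0,
\end{equation*}
and the whole task reduces to proving that its left-hand side is in fact $\le0$ for every admissible reward whenever $\rho\ge1$, which contradicts the strict inequality and rules out the equilibrium.

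To handle a general non-negative increasing $R$ I would use the representation $R(a)=R(0)+\int_0^a\mathrm{d}R$ with $\mathrm{d}R\ge0$: a combination $\sum_k c_k R(a_k)$ is $\le0$ for all such $R$ exactly when the total weight $\sum_k c_k$ and every tail sum of weights over arguments exceeding a given threshold are non-positive. The four arguments are ordered by the standing assumption $s_1<s_2<2s_1$, which forces $s_2$ to be the largest and $2s_1-s_2<s_1<s_2$, with only the position of $2(s_2-s_1)$ relative to $s_1$ undetermined (splitting the verification into the two sub-cases $s_2\lessgtr\tfrac32 s_1$). A clean checkpoint is the boundary $\rho=1$, where $p_i=q_i$ and each $v_i$ degenerates to a monomial $\lambda_1^{a}\lambda_2^{b}$; there the weights can be computed in closed form and the tail-sum signs verified directly, after which I would establish monotonicity in $\rho$ to extend the conclusion to all $\rho>1$.

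The main obstacle is precisely this uniform sign verification (equivalently, the needed monotonicity in $\rho$). The assembled weights $\beta_2\alpha_1,\ \beta_2\beta_1,\ \gamma_1\alpha_2,\ \gamma_1\gamma_2$ are high-degree polynomials in $\lambda_1$ and $\rho$ of mixed sign---already at $\rho=1,\ \lambda_1=\lambda_2$ one finds $\alpha_1,\alpha_2,\gamma_2<0$---so showing that every tail sum stays non-positive uniformly over $\lambda_1\in(0,1)$ and $\rho\ge1$ is the delicate, computation-heavy core. I would manage it by factoring the common positive monomials out of each tail sum, reducing to a short list of one-variable polynomial inequalities on $(0,1)$ that can be certified termwise; the strict separation already visible at $\rho=1$ is what upgrades the conclusion to the strict bound $\eta>\tfrac12$.
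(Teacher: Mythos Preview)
Your plan takes an unnecessary detour and risks a genuine gap. To show (H,H) is \emph{not} a Nash equilibrium you only need \emph{one} of \eqref{eq:condv1}, \eqref{eq:condv2} to fail; there is no reason to couple them through the shared term $R(s_2-s_1)$. The paper's proof exploits exactly this: it works with \eqref{eq:condv1} alone, picks the single pair $s_2=\tfrac{3}{2}s_1$ so that $2s_1-s_2=s_2-s_1=\tfrac{s_1}{2}$, and thereby collapses \eqref{eq:condv1} to $\alpha_1 R(s_1)+(\beta_1-\gamma_1)R(\tfrac{s_1}{2})>0$. At $p_0=q_0$ a short direct computation with the monomial expressions $v_i$ gives $\alpha_1<0$ and $\beta_1-\gamma_1<0$, so the inequality fails for \emph{every} positive $R$ with no tail-sum machinery, no case split on the position of $2(s_2-s_1)$, and no uniform-in-$(s_1,s_2)$ analysis. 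The extension away from $\rho=1$ is then the same monotonicity-toward-PoW heuristic you invoke.

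More worrying than the extra complexity is that your combined inequality may simply not be $\le 0$. It is only a \emph{necessary} consequence of both conditions holding, so its failure suffices but its truth does not obstruct your conclusion---yet you still have to prove it fails. The paper's simulations (and later discussion around \eqref{eq:parameter-necessary1}--\eqref{eq:parameter-necessary2}) indicate that for constant $R$ the IF condition \eqref{eq:condv2} actually \emph{holds} at small $\eta$ while only \eqref{eq:condv1} breaks; in that regime your total weight $\beta_2(\alpha_1+\beta_1)+\gamma_1(\alpha_2+\gamma_2)$ is a difference of a negative and a positive contribution and there is no a priori reason it comes out non-positive. So the ``delicate, computation-heavy core'' you flag is not just tedious---it may be unprovable as stated. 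Drop the coupling, specialize $(s_1,s_2)$, and attack \eqref{eq:condv1} directly.
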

\begin{proof}
When $p_{0}=q_{0}$, we have
\begin{align*}
v_{0} & =1, & v_{1} & =\lambda_{2}, & v_{2} & =\lambda_{1},\\
v_{3} & =2\lambda_{1}\lambda_{2}, & v_{4} & =\lambda_{1}v_{3}, & v_{5} & =\lambda_{1}v_{4},\\
v_{6} & =\lambda_{2}v_{5}, & v_{7} & =\lambda_{2}v_{6}, & v_{8} & =\lambda_{1}v_{7},\\
v_{9} & =\lambda_{1}v_{8}, & v_{10} & =\lambda_{2}v_{3}, & v_{11} & =\lambda_{2}v_{f0},\\
v_{12} & =\lambda_{1}v_{f0}, & v_{13} & =2\lambda_{1}\lambda_{2}v_{f0}.
\end{align*}
Hence, 
\begin{align*}
\alpha_{1} & =\sum_{i=10}^{13}v_{i}\left(v_{2}+v_{3}+v_{5}+v_{9}\right)-\sum_{i=0}^{9}v_{i}\left(v_{12}+v_{13}\right)\\
 & =\left(v_{10}+v_{11}\right)\left(v_{2}+v_{3}+v_{5}+v_{9}\right)\\
 & -\left(v_{12}+v_{13}\right)\left(v_{0}+v_{1}+v_{4}+v_{6}+v_{7}+v_{8}\right)\\
 & <v_{10}\left(1+\lambda_{2}\right)\left(v_{2}+v_{3}+v_{5}+v_{9}\right)\\
 & -v_{10}\left(1+2\lambda_{2}\right)\left(\lambda_{1}v_{0}+\lambda_{1}v_{1}+\lambda_{1}v_{4}+\lambda_{1}v_{8}\right)\\
 & <v_{10}\left(\left(1+\lambda_{2}\right)v_{2}-\left(1+2\lambda_{2}\right)\lambda_{1}v_{0}\right)\\
 & +v_{10}\left(\left(1+\lambda_{2}\right)v_{3}-\lambda_{1}\left(1+2\lambda_{2}\right)v_{1}\right)=0.
\end{align*}
Since $0\leq\lambda_{1},\lambda_{2}\leq1$, $v_{6}=\lambda_{1}^{2}\lambda_{2}v_{3}<v_{3}$,
$v_{7}=\lambda_{2}^{2}v_{5}\leq v_{5}$, we have
\begin{align*}
\beta_{1}-\gamma_{1} & =\sum_{i=10}^{13}v_{i}\left(v_{6}+v_{7}\right)-\sum_{i=0}^{9}v_{i}\left(v_{12}+v_{13}\right)\\
 & <\sum_{i=10}^{13}v_{i}\left(v_{2}+v_{3}+v_{5}+v_{9}\right)-\sum_{i=0}^{9}v_{i}\left(v_{12}+v_{13}\right)\\
 & =\alpha_{1}^{v}<0.
\end{align*}
By setting $s_{2}=\frac{3}{2}s_{1}$, we have
\begin{align*}
\alpha_{1}R\left(s_{1}\right)+\beta_{1}R\left(2s_{1}-s_{2}\right)-\gamma_{1}R\left(s_{2}-s_{1}\right) & =\\
\alpha_{1}R\left(s_{1}\right)+\left(\beta_{1}-\gamma_{1}\right)R\left(\frac{s_{1}}{2}\right) & <0,
\end{align*}
i.e., \eqref{eq:condv1} is not satisfied when $p_{0}=q_{0}$. When
the security overhead ratio $\eta$ grows larger, the consensus mechanism
increasingly resembles pure PoW, which will satisfy the conditions
in Theorem \ref{thm:security-conditions} eventually. Hence, $p_{0}>q_{0}$
is a necessary condition to reach the Nash equilibrium in Theorem
\ref{thm:security-conditions}, and $\eta$ is lower bounded by $\eta>\frac{1}{2}$. 
\end{proof}

Theorem \ref{thm:eta>1/2} shows that, to attain the desired security
conditions in Theorem \ref{thm:security-conditions}, the security
overhead ratio $\eta$ must be greater than 1/2. Failure to meet this
requirement would enable a selfish miner to reap greater profits by
stealing solutions from others, which would prompt more miners to
behave dishonestly, resulting in a breakdown of PoUW. The equivalent
condition, $p_{0}>q_{0}$, implies that the ratio of computational
power allocated for solving optimization problems should be limited.
Recall that, $p_{0}$ is determined by the computational power of
the network and the inherent hardness of the optimization problem,
while $q_{0}$ is determined by both the computational power of the
network and the hardness of PoW. Theorem \ref{thm:eta>1/2} reveals
that, when solving the optimization and PoW are of comparable difficulty,
the ratio of computational power allocated for optimization cannot
exceed that allotted for PoW. One can always meet the constraint by
increasing the difficulty level of PoW, in order to reduce the value
of $q_{0}$. However, this would also lengthen the interval of block
generation, as it becomes harder to mine a valid block. This means,
there is an inherent trade-off between block interval and useful work
efficiency.

Theorem \ref{thm:eta>1/2} also proves that an optimization-based
PoUW cannot work without an anti-plagiarism mechanism, because
in such case $\eta=0$. Hence, to tackle the solution plagiarism issue,
we conjecture that any consensus mechanism based on optimization-solving
should be equipped with countermeasures. Note that, in the framework,
without a direct anti-plagiarism mechanism, PoW serves as
the necessary safeguard, protecting original solutions and preserving
the security of the blockchain. Theorem \ref{thm:eta>1/2} further
suggests that, when optimization and PoW are of comparable hardness,
more than 50\% of the computational power is required as
the security overhead to protect optimization solutions. Of course, embedding the anti-plagiarism countermeasure into the solution may be more efficient, but it may come at the expense of constraining problem types or solution methods.

\section{Reward Principle \label{sec:Reward-Function-Design}}

Security against selfish miners can be attained by proper design of
the reward functions since they always act in their best interests.
Therefore, in this section, we explore the design of the block reward
function $R\left(s\right)$ with respect to the optimization progress
$s$. We first outline its basic properties as below:
\begin{itemize}
\item For any $s>0$, $R\left(s\right)>0$.
\item $R\left(s\right)$ must be monotonically non-decreasing, i.e., $R\left(x_{1}\right)\geq R\left(x_{2}\right)$
holds for any $x_{1}>x_{2}\geq0$. A special case is when $R\left(s\right)$
is a constant, i.e., each block receives the same block reward. 
\end{itemize}

\subsection{Reward Function Design}

Assuming that $R\left(s\right)$ is concave and first-order derivable,
we can obtain a reward function design principle, as shown in the
following theorem.  
\begin{thm}
\label{thm:nece-and-suff-honest-condition}If $\eta>\frac{1}{2}$
and $R\left(s\right)$ is concave and first-order derivable, the security
conditions against selfishness (in Theorem \ref{thm:security-conditions})
are equivalent to the following reward function design principle:
\begin{equation}
R\left(0\right)>\max\left\{ \frac{\gamma_{1}-\alpha_{1}}{\beta_{1}},-\frac{\alpha_{2}}{\beta_{2}+\gamma_{2}}\right\} R\left(s\right).\label{eq:nece-and-suff-condition}
\end{equation}
\end{thm}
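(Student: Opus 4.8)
## Proof Proposal for Theorem 3

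The plan is to translate the two security conditions of Theorem~\ref{thm:security-conditions} into the single inequality \eqref{eq:nece-and-suff-condition} by exploiting concavity to control the nonlinear reward terms. The key observation is that the arguments appearing in \eqref{eq:condv1} and \eqref{eq:condv2} are not independent: since $s_1 < s_2 < 2s_1$, we have the chain $0 < 2s_1 - s_2 < s_1 < s_2$ and also $0 < s_2 - s_1 < s_1$ and $0 < 2s_2 - 2s_1 < s_2$. Moreover, there are exact linear relations among these points, e.g. $(2s_1 - s_2) + (s_2 - s_1) = s_1$ and $(2s_2 - 2s_1) = 2(s_2 - s_1)$, which is precisely what makes concavity bite. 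First I would record, for a concave $R$ with $R(0) \ge 0$, the elementary subadditivity-type bounds: $R(2s_1 - s_2) \ge R(s_1) - R(s_2 - s_1) + R(0)$ is not quite what is needed; rather the useful direction is $R(2s_1-s_2) \le $ something, or a bound relating $R$ at an intermediate point to a convex combination of its values at the endpoints. Specifically, writing $2s_1 - s_2 = \theta \cdot 0 + (1-\theta)s$ for a suitable $s$ and $\theta \in [0,1]$, concavity gives $R(2s_1-s_2) \ge \theta R(0) + (1-\theta)R(s)$.

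The main work is the forward direction: assuming \eqref{eq:condv1}--\eqref{eq:condv2} hold for all admissible $(s_1,s_2)$, derive \eqref{eq:nece-and-suff-condition}. I would take a limiting configuration — letting $s_2 \to 2s_1$, so that $2s_1 - s_2 \to 0$ and $s_2 - s_1 \to s_1$ — which makes the first condition degenerate into a statement about $R(0)$, $R(s_1)$ alone. In that regime \eqref{eq:condv1} becomes approximately $\alpha_1 R(s_1) + \beta_1 R(0) - \gamma_1 R(s_1) > 0$, i.e. $\beta_1 R(0) > (\gamma_1 - \alpha_1)R(s_1)$, yielding the first term in the max. One must check that $\beta_1 > 0$ (clear from its definition as a sum of products of positive $v_i$'s) and track how $\alpha_1,\beta_1,\gamma_1$ behave in this limit. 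Similarly, for \eqref{eq:condv2}, a configuration where $s_2 - s_1 \to 0$ and $2s_2 - 2s_1 \to 0$ collapses the condition to $\alpha_2 R(s_2) + (\beta_2 + \gamma_2)R(0) > 0$, giving $R(0) > -\frac{\alpha_2}{\beta_2+\gamma_2}R(s_2)$ when $\beta_2 + \gamma_2 > 0$. The converse direction — showing \eqref{eq:nece-and-suff-condition} implies both security conditions — is where concavity and first-order differentiability are used in full: I would bound each of $R(2s_1-s_2)$, $R(s_2-s_1)$, $R(2s_2-2s_1)$ from the appropriate side by tangent-line or chord estimates anchored at $0$ and at $s_1$ or $s_2$, substitute, and collect terms so that the coefficient structure reduces to \eqref{eq:nece-and-suff-condition}. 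The fact that $\eta > \tfrac12$ (Theorem~\ref{thm:eta>1/2}) is needed to guarantee the signs of the aggregated coefficients — in particular that the relevant combinations multiplying $R(s)$ have the sign that lets the max-expression be the binding constraint.

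The hard part will be verifying the sign conditions on the composite coefficients $\gamma_1 - \alpha_1$, $\beta_1$, $\alpha_2$, and $\beta_2 + \gamma_2$ uniformly over the admissible parameter region $\{\lambda_1 + \lambda_2 = 1,\ s_1 < s_2 < 2s_1,\ \eta > \tfrac12\}$, since the $v_i$ are rational functions of $p_0, q_0, \lambda_1, \lambda_2$ defined by the recursion in the excerpt, and the expressions do not obviously factor. I expect this to require the same kind of term-by-term domination argument used in the proof of Theorem~\ref{thm:eta>1/2} (bounding $v_6, v_7$ against $v_3, v_5$, etc.), pushed through both inequalities simultaneously. A secondary subtlety is ensuring the ``limiting configuration'' argument is rigorous: since the claim is an equivalence for a \emph{fixed} task (fixed $s_1, s_2$), I would instead argue directly at fixed $(s_1,s_2)$ using the concavity bounds rather than genuinely taking limits, using the limit only as heuristic guidance for which chord/tangent estimates to apply. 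If the coefficient signs turn out not to be uniform, the statement would need the extra hypothesis that the configuration is ``generic'' in the sense already invoked after Theorem~\ref{thm:security-conditions}, and I would flag that.
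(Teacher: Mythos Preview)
Your proposal lands on the right heuristic---the extremal configurations $s_2\to 2s_1$ and $s_1\to s_2$ are exactly where the two halves of \eqref{eq:nece-and-suff-condition} come from---but the paper's route is both simpler and more precise than the chord/tangent scheme you sketch. Rather than bounding $R(2s_1-s_2)$, $R(s_2-s_1)$, $R(2s_2-2s_1)$ individually, the paper differentiates $f(s_1,s_2)=\alpha_1 R(s_1)+\beta_1 R(2s_1-s_2)-\gamma_1 R(s_2-s_1)$ in $s_2$ and $g(s_1,s_2)=\alpha_2 R(s_2)+\beta_2 R(s_2-s_1)+\gamma_2 R(2s_2-2s_1)$ in $s_1$, shows each is monotone nonincreasing, and evaluates at the boundary; this gives both implications simultaneously and bypasses any anchoring of chords. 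Notably, for $f$ one only needs $R'\ge 0$ and $\beta_1,\gamma_1>0$---concavity is not used there at all---whereas your plan would invoke it for both conditions.

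The substantive gap in your outline is the sign control for $g$. You flag $\beta_2+\gamma_2$ as the combination to check, but the monotonicity of $g$ in $s_1$ hinges on the different combination $\beta_2+2\gamma_2>0$ (since $\partial g/\partial s_1 = -\beta_2 R'(s_2-s_1)-2\gamma_2 R'(2s_2-2s_1)$, and when $\gamma_2<0$ one needs both concavity, to compare the two $R'$ values, and $\beta_2+2\gamma_2>0$). The paper isolates this as a separate lemma (Lemma~\ref{lem:beta-gamma}: $\beta_2>-\tfrac{16}{5}\gamma_2$ under $p_0>q_0$), proved by a chain of elementary bounds on the $v_i$ ratios; this is where the hypothesis $\eta>\tfrac12$ actually enters. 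Your instinct that a $v_i$-domination argument in the spirit of Theorem~\ref{thm:eta>1/2} is required is correct, but the target inequality is $\beta_2+2\gamma_2>0$, not the ones you list.
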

To prove Theorem \ref{thm:nece-and-suff-honest-condition}, we first
prove the following lemma about the relative relationship between
$\beta_{2}$ and $\gamma_{2}$.
\begin{lem}
\label{lem:beta-gamma} Under the condition that $p_{0}>q_{0}$, we
have 
\begin{equation}
\beta_{2}>-\frac{16}{5}\gamma_{2}.\label{eq:lem}
\end{equation}
\end{lem}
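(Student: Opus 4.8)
The plan is to establish \eqref{eq:lem} by direct manipulation of the closed-form expressions for $\beta_2$ and $\gamma_2$ given in Theorem~\ref{thm:security-conditions}, exploiting the recursive structure of the $v_i$ and the hypothesis $p_0>q_0$. Recall that $\beta_2=\left(\sum_{i=0}^{3}v_i+\sum_{i=14}^{21}v_i\right)(v_4+v_5)$ is manifestly positive, while $\gamma_2=\left(\sum_{i=0}^{3}v_i+\sum_{i=14}^{21}v_i\right)(v_8+v_9)-\left(\sum_{i=0}^{3}v_i+\sum_{i=4}^{9}v_i\right)(v_{20}+v_{21})$ can have either sign. When $\gamma_2\ge 0$ the inequality is immediate since the right-hand side $-\tfrac{16}{5}\gamma_2$ is nonpositive and $\beta_2>0$; so the substantive case is $\gamma_2<0$, where I must bound $|\gamma_2|$ from above in terms of $\beta_2$, i.e.\ show $\beta_2>\tfrac{16}{5}|\gamma_2|$, equivalently $5\beta_2>16\bigl(\left(\sum_{i=0}^{3}v_i+\sum_{i=4}^{9}v_i\right)(v_{20}+v_{21})-\left(\sum_{i=0}^{3}v_i+\sum_{i=14}^{21}v_i\right)(v_8+v_9)\bigr)$; it is even cleaner to just prove the stronger statement $\beta_2>\tfrac{16}{5}\left(\sum_{i=0}^{3}v_i+\sum_{i=4}^{9}v_i\right)(v_{20}+v_{21})$, which dominates $-\tfrac{16}{5}\gamma_2$ regardless of sign.

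First I would record the relevant recursions from the $v_i$ table: with the shorthand $a\triangleq q_1/(p_1+q_2)$, $b\triangleq p_1/(q_1+q_2)$ one has $v_4=a v_3$, $v_5=b v_4=ab\,v_3$, and similarly $v_{14}=a v_3$, $v_{15}=\tfrac{q_2}{p_1+p_2}v_{14}$, and then $v_{19}$, $v_{20}=a v_{19}$, $v_{21}=b v_{20}$. The key observation is that $v_{14}=v_4$, so the branch $14\text{--}21$ is a ``rescaled copy'' of the branch $4\text{--}9$ with every step attenuated by extra factors strictly less than one (because $p_0>q_0$ makes the optimization-branching factors dominate the hashing ones, and the factors $\tfrac{q_2}{p_1+p_2}$, $\tfrac{p_2}{p_1+q_2}$, etc.\ are all $<1$). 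Concretely I would show $v_{20}+v_{21}$ is bounded by a fixed constant times $v_4+v_5$ — tracing $v_{14}=v_4$ through $v_{15},\dots,v_{19}$ one picks up at least a couple of contraction factors, and $v_{20}+v_{21}=(a+ab)v_{19}=(v_4+v_5)\cdot v_{19}/v_3$, so everything reduces to bounding $v_{19}/v_3$. Then since $\beta_2\ge(v_0)(v_4+v_5)=v_4+v_5$ and $\left(\sum_{i=0}^{3}v_i+\sum_{i=4}^{9}v_i\right)$ is a bounded sum, I need $v_{19}/v_3$ times that sum to be below $5/16$ — which should follow once $p_0>q_0$ is used to bound each branching ratio.

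The main obstacle I anticipate is getting a clean enough bound on $v_{19}/v_3$ and on the prefactor sum $\sum_{i=0}^{3}v_i+\sum_{i=4}^{9}v_i$ simultaneously: these depend on $\lambda_1,\lambda_2$ and on the ratio $p_0/q_0$, and a naive worst-case over all admissible parameters might only give a constant larger than $16/5$. The resolution is that $v_{19}/v_3$ is a product of \emph{many} factors each strictly below one (the path $3\to14\to15\to\cdots\to19$ alone accumulates several $<1$ multipliers, and under $p_0>q_0$ the ``$q$-over-$(p+q)$''-type factors are bounded well away from one), so the product decays fast enough to absorb the bounded prefactor sum and still clear the $5/16$ threshold; the constant $16/5$ in the statement is presumably not tight but chosen to make the bookkeeping go through. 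I would therefore (i) dispatch the $\gamma_2\ge0$ case trivially, (ii) in the case $\gamma_2<0$ reduce to $\beta_2>\tfrac{16}{5}\left(\sum_{i=0}^{3}v_i+\sum_{i=4}^{9}v_i\right)(v_{20}+v_{21})$, (iii) use $v_{14}=v_4$ and the recursions to write $(v_{20}+v_{21})=(v_4+v_5)(v_{19}/v_3)$, (iv) bound $v_{19}/v_3$ and $\sum_{i=0}^{3}v_i+\sum_{i=4}^{9}v_i$ using $p_0>q_0$, and (v) combine with $\beta_2\ge v_4+v_5$ to conclude.
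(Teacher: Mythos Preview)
Your plan has a genuine gap: the ``stronger statement'' you propose in step~(ii), namely $\beta_2>\tfrac{16}{5}\bigl(\sum_{i=0}^{9}v_i\bigr)(v_{20}+v_{21})$, is \emph{false} in general under the sole hypothesis $p_0>q_0$. After your reductions (iii)--(v) this statement becomes $\bigl(\sum_{i=0}^{9}v_i\bigr)\cdot v_{19}/v_3<\tfrac{5}{16}$, and both factors are problematic. First, the prefactor $\sum_{i=0}^{9}v_i$ is \emph{not} bounded: already $v_3=(p_1v_1+p_2v_2)/(q_1+q_2)$ carries an overall factor $p_0/q_0$, so for instance with $\lambda_1=\lambda_2=\tfrac12$ one has $v_3=p_0^2/\bigl(q_0(p_0+q_0)\bigr)\to\infty$ as $p_0/q_0\to\infty$. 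Second, $v_{19}/v_3$ does \emph{not} decay: it stays bounded away from zero (in the same regime it tends to $\tfrac12$), so your hoped-for ``many contraction factors'' do not drive the product below $5/16$. The weakening $\beta_2\ge v_4+v_5$ in step~(v) is what introduces the uncontrolled factor $\sum_{i=0}^{9}v_i$ in place of the ratio $B/A<1$; and discarding the subtracted term $A(v_8+v_9)$ from $-\gamma_2$ in step~(ii) throws away exactly the cancellation that makes the bound work.

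The paper's argument keeps both ingredients you dropped. It first uses $\sum_{i=14}^{21}v_i>\sum_{i=4}^{9}v_i$ (so $B/A<1$) to obtain $-\gamma_2/\beta_2<(v_{20}+v_{21}-v_8-v_9)/(v_4+v_5)=(v_{19}-v_7)/v_3$, and then observes that the $v_7$ term cancels one summand in the expansion of $v_{19}$ exactly, leaving $(v_{19}-v_7)/v_3=\tfrac{p_1}{q_1+q_2}\bigl(v_{16}+\tfrac{p_2}{q_1+p_2}v_{15}\bigr)/v_3$. This residual quantity, unlike $v_{19}/v_3$ itself, simplifies to $\lambda_1\lambda_2$ times a factor at most $5/4$, hence is bounded by $5/16$ uniformly in all parameters. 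So the subtraction of $v_8+v_9$ (equivalently $v_7$) is the essential idea, not an optional refinement; without it no uniform constant suffices.
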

\begin{proof}
According to the expressions of $v_{i}$, we have 
\[
\sum_{i=14}^{21}v_{i}>\sum_{i=4}^{9}v_{i}.
\]
When $\gamma_{2}\geq0$, since $\beta_{2}>0$, \eqref{eq:lem} obviously
holds. When $\gamma_{2}<0$, we have  \small
\begin{align*}
-\frac{\gamma_{2}}{\beta_{2}} & <\frac{v_{20}+v_{21}-v_{8}-v_{9}}{v_{4}+v_{5}}=\frac{v_{19}-v_{7}}{v_{3}}\\
 & =\frac{\frac{p_{1}}{q_{1}+q_{2}}\left(v_{16}+\frac{p_{2}}{q_{1}+p_{2}}v_{15}\right)}{v_{3}}\\
 & =\frac{p_{1}}{q_{1}+q_{2}}\frac{q_{2}}{p_{1}+p_{2}}\frac{q_{1}}{p_{1}+q_{2}}\left(\frac{p_{2}}{p_{1}+q_{2}}+\frac{p_{2}}{q_{1}+p_{2}}\right)\\
 & =\frac{q_{2}}{q_{1}+q_{2}}\frac{p_{1}}{p_{1}+p_{2}}\left(\frac{p_{2}}{p_{1}+q_{2}}\frac{q_{1}}{p_{1}+q_{2}}+\frac{p_{2}}{q_{1}+p_{2}}\frac{q_{1}}{p_{1}+q_{2}}\right)\\
 & <\frac{q_{1}}{q_{1}+q_{2}}\frac{p_{2}}{p_{1}+p_{2}}\left(\frac{q_{1}p_{2}}{p_{1}q_{2}}+\frac{1}{\left(\frac{1}{q_{1}}+\frac{1}{p_{2}}\right)\left(p_{1}+q_{2}\right)}\right)\\
 & =\lambda_{1}\lambda_{2}\left(1+\frac{1}{\frac{\lambda_{1}}{\lambda_{2}}+\frac{\lambda_{2}}{\lambda_{1}}+\frac{p_{0}}{q_{0}}+\frac{q_{0}}{p_{0}}}\right)\\
 & <\frac{5}{4}\lambda_{1}\lambda_{2}\leq\frac{5}{4}\left(\frac{\lambda_{1}+\lambda_{2}}{2}\right)^{2}=\frac{5}{16},
\end{align*}
\normalsize i.e., $\beta_{2}>-\frac{16}{5}\gamma_{2}$. 
\end{proof}
With Lemma \ref{lem:beta-gamma}, we can then prove Theorem \ref{thm:nece-and-suff-honest-condition}
as below.
\begin{proof}
Define $f\left(s_{1},s_{2}\right)\triangleq\alpha_{1}R\left(s_{1}\right)+\beta_{1}R\left(2s_{1}-s_{2}\right)-\gamma_{1}R\left(s_{2}-s_{1}\right)$
and $g\left(s_{1},s_{2}\right)\triangleq\alpha_{2}R\left(s_{2}\right)+\beta_{2}R\left(s_{2}-s_{1}\right)+\gamma_{2}R\left(2s_{2}-2s_{1}\right)$.
Since $\beta_{1},\gamma_{1}>0$ and $R\left(s\right)$ is monotonically
non-decreasing,
\begin{align*}
\frac{\partial f\left(s_{1},s_{2}\right)}{\partial s_{2}} & =-\beta_{1}R^{\prime}\left(2s_{1}-s_{2}\right)-\gamma_{1}R^{\prime}\left(s_{2}-s_{1}\right)\leq0,
\end{align*}
in which $R^{'}\left(s\right)$ is the first-order derivative of $R\left(s\right)$.
This means, $f\left(s_{1},s_{2}\right)$ is a constant to $s_{2}$
iff $R^{'}\left(s\right)\equiv0$, otherwise $f\left(s_{1},s_{2}\right)$
is monotonically decreasing to $s_{2}$. In either case, we have
\[
f\left(s_{1},s_{2}\right)\geq f\left(s_{1},2s_{1}\right)=\left(\alpha_{1}-\gamma_{1}\right)R\left(s_{1}\right)+\beta_{1}R\left(0\right),
\]
which indicates that, \eqref{eq:condv1} is equivalent to 
\[
\left(\alpha_{1}-\gamma_{1}\right)R\left(s_{1}\right)+\beta_{1}R\left(0\right)>0,
\]
or
\[
R\left(0\right)>\frac{\gamma_{1}-\alpha_{1}}{\beta_{1}}R\left(s_{1}\right).
\]
On the other hand, 
\[
\frac{\partial g\left(s_{1},s_{2}\right)}{\partial s_{1}}=-\beta_{2}R^{'}\left(s_{2}-s_{1}\right)-2\gamma_{2}R^{'}\left(2s_{2}-2s_{1}\right).
\]
If $\gamma_{2}\geq0$, since $\beta_{2}>0$, it is obvious that $\frac{\partial g\left(s_{1},s_{2}\right)}{\partial s_{1}}<0$.
If $\gamma_{2}<0$, from Lemma \ref{lem:beta-gamma} we know that
$\beta_{2}+2\gamma_{2}>0$. When the reward function $R\left(s\right)$
is concave, we have $R^{'}\left(s_{2}-s_{1}\right)\geq R^{'}\left(2s_{2}-2s_{1}\right)\geq0$,
so that
\[
\frac{\partial g\left(s_{1},s_{2}\right)}{\partial s_{1}}<-\left(\beta_{2}+2\gamma_{2}\right)R^{'}\left(s_{2}-s_{1}\right)\leq0.
\]
Hence
\begin{align*}
g\left(s_{1},s_{2}\right) & \geq g\left(s_{2},s_{2}\right)=\alpha_{2}R\left(s_{2}\right)+\left(\beta_{2}+\gamma_{2}\right)R\left(0\right),
\end{align*}
which means, \eqref{eq:condv2} is equivalent to 
\begin{align*}
\alpha_{2}R\left(s_{2}\right)+\left(\beta_{2}+\gamma_{2}\right)R\left(0\right) & >0,
\end{align*}
or
\[
R\left(0\right)>-\frac{\alpha_{2}}{\beta_{2}+\gamma_{2}}R\left(s_{2}\right).
\]
Combining the above analysis, we derive that
\begin{equation}
R\left(0\right)>\max\left\{ \frac{\gamma_{1}-\alpha_{1}}{\beta_{1}},-\frac{\alpha_{2}}{\beta_{2}+\gamma_{2}}\right\} R\left(s\right)\label{eq:reward-eq-concave}
\end{equation}
is equivalent to the conditions in Theorem \ref{thm:security-conditions},
when $\eta>\frac{1}{2}$ and $R\left(s\right)$ is concave. 
\end{proof}

Theorem \ref{thm:nece-and-suff-honest-condition} shows that, if the
reward function $R\left(s\right)$ is concave and first-order derivable,
with $\eta>\frac{1}{2}$ guaranteed (as required by Theorem \ref{thm:eta>1/2}),
to reach security against selfish miners, $R\left(s\right)$ must
satisfy \eqref{eq:nece-and-suff-condition}. With a first-order derivable
concave reward function, as long as \eqref{eq:nece-and-suff-condition}
is satisfied, the selfish miner would gain less by deviating from
honest mining. Note that, this does not mean that the reward function
cannot be non-concave. Even if $R\left(s\right)$ is convex, \eqref{eq:nece-and-suff-condition}
is still a necessary condition, because it is derived by considering
specific pairs of $\left(s_{1},s_{2}\right)$ that minimize the honest-selfish
payoff gap. Nonetheless, additional conditions other than \eqref{eq:nece-and-suff-condition}
may be required.

When $\alpha_{1}\geq\gamma_{1}$ and $\alpha_{2}\geq0$, $\max\left\{ \frac{\gamma_{1}-\alpha_{1}}{\beta_{1}},-\frac{\alpha_{2}}{\beta_{2}+\gamma_{2}}\right\} <0$,
and \eqref{eq:nece-and-suff-condition} is bound to hold. When $\alpha_{1}<\gamma_{1}\text{ or }\alpha_{2}<0,$
$\max\left\{ \frac{\gamma_{1}-\alpha_{1}}{\beta_{1}},-\frac{\alpha_{2}}{\beta_{2}+\gamma_{2}}\right\} >0$,
and because $R\left(s\right)$ is non-decreasing, $\max\left\{ \frac{\gamma_{1}-\alpha_{1}}{\beta_{1}},-\frac{\alpha_{2}}{\beta_{2}+\gamma_{2}}\right\} <1$
must be guaranteed. This means, the following constraints
\begin{align}
\alpha_{1}+\beta_{1}-\gamma_{1} & >0,\label{eq:parameter-necessary1}\\
\alpha_{2}+\beta_{2}+\gamma_{2} & >0,\label{eq:parameter-necessary2}
\end{align}
are to be satisfied, which can be further transformed into a tighter
lower bound on the security overhead ratio $\eta$ than the bound
$\eta>\frac{1}{2}$. When \eqref{eq:parameter-necessary1} is not
satisfied, selfish miners may profit from solution plagiarism, no
matter what reward function is selected. On the other hand, if \eqref{eq:parameter-necessary2}
is not met, selfish miners may deliberately ignore other miners' blocks
for a higher profit, i.e., the chain formation issue. 

With larger value of $\max\left\{ \frac{\gamma_{1}-\alpha_{1}}{\beta_{1}},-\frac{\alpha_{2}}{\beta_{2}+\gamma_{2}}\right\} $,
more constraints are imposed on the reward function design. When $\frac{\gamma_{1}-\alpha_{1}}{\beta_{1}}>-\frac{\alpha_{2}}{\beta_{2}+\gamma_{2}}$,
solution plagiarism is the primary issue faced by the designed framework,
while chain formation becomes more critical if $-\frac{\alpha_{2}}{\beta_{2}+\gamma_{2}}>\frac{\gamma_{1}-\alpha_{1}}{\beta_{1}}$.
Notably, \eqref{eq:parameter-necessary1} and \eqref{eq:parameter-necessary2}
are necessary conditions for \eqref{eq:condv1} and \eqref{eq:condv2},
regardless of the reward function's concavity. Also note that the
necessary condition $\eta>\frac{1}{2}$ is contained in \eqref{eq:parameter-necessary1}
and \eqref{eq:parameter-necessary2}.

Another interpretation of Theorem \ref{thm:nece-and-suff-honest-condition}
is that, if we break the reward into two parts:
\[
R\left(s\right)=R_{0}+\widetilde{R}\left(s\right),
\]
in which $R_{0}=R\left(0\right)$ and $\widetilde{R}\left(0\right)=0$,
we can further transform \eqref{eq:nece-and-suff-condition} into
an upper bound on $R\left(s\right)$, as given by
\begin{equation}
R\left(s\right)<\mu R_{0},\label{eq:upperbound-R}
\end{equation}
where 
\begin{equation}
\mu=\begin{cases}
\frac{1}{\max\left\{ \frac{\gamma_{1}-\alpha_{1}}{\beta_{1}},-\frac{\alpha_{2}}{\beta_{2}+\gamma_{2}}\right\} }, & \alpha_{1}<\gamma_{1}\text{ or }\alpha_{2}<0,\\
+\infty, & \alpha_{1}\geq\gamma_{1}\text{ and }\alpha_{2}\geq0.
\end{cases}\label{eq:mu}
\end{equation}
This indicates that, the block reward must be smaller than an upper
bound proportional to $R_{0}$. $R_{0}$ represents the constant part
of the reward, similar to the block generation reward in Bitcoin.
This means that the extra reward for improving the optimization solution
should constitute a limited share of the total reward. Conversely,
any concave reward function satisfying \eqref{eq:upperbound-R} is
feasible. As a special case, a fixed reward with no extra reward for
optimization progress can always meet the security conditions, as
long as \eqref{eq:parameter-necessary1} and \eqref{eq:parameter-necessary2}
hold. 

In the proof of Theorem \ref{thm:nece-and-suff-honest-condition},
we derive \eqref{eq:parameter-necessary1} when $s_{2}\rightarrow2s_{1}$,
while \eqref{eq:parameter-necessary2} is obtained when $s_{1}\rightarrow s_{2}$.
In a real-life setting, however, we may not desire a solution with
trivial optimization progress to be valid. For instance, when $s_{1}\rightarrow s_{2}$,
if a block with solution $y^{(n)}+s_{1}$ is accepted, a block with
solution $y^{(n)}+s_{2}$ mined afterward would make the optimization
progress $s_{2}-s_{1}$ too trivial. Intuitively, a minimum optimization
progress can be specified to tackle this issue. In that case, \eqref{eq:nece-and-suff-condition}
can be relaxed, as the original boundary conditions ($s_{2}\rightarrow2s_{1}$
or $s_{1}\rightarrow s_{2}$) can no longer be taken.

\subsection{Linear Reward Function}

As a direct application of Theorem \ref{thm:nece-and-suff-honest-condition},
consider a linear reward function 
\begin{equation}
R\left(s\right)=ks+b,s\leq s_{m},\label{eq:linear_reward}
\end{equation}
in which $k\geq0,b>0$, and $s_{m}$ is the maximum optimization improvement
for both miners. A special case is when $k=0$, the block reward (excluding
transaction fees) is a fixed value, as in traditional PoW blockchain
systems. Now we obtain a linear reward design, given by the following
Theorem.
\begin{thm}
\label{thm:linear}If $\eta>\frac{1}{2}$ and the reward function
is linear, a sufficient condition to guarantee security against selfishness
is
\begin{equation}
0\leq k<\frac{\mu-1}{s_{m}}b,\label{eq:linear_constraint}
\end{equation}
in which, $\mu$ is given by \eqref{eq:mu}.
\end{thm}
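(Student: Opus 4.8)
The plan is to apply Theorem \ref{thm:nece-and-suff-honest-condition} directly, since a linear reward function with $k \geq 0$ is concave (indeed affine) and first-order derivable, so the equivalence stated there holds: security against selfishness is equivalent to the inequality $R(0) > \max\left\{\frac{\gamma_1 - \alpha_1}{\beta_1}, -\frac{\alpha_2}{\beta_2 + \gamma_2}\right\} R(s)$ being satisfied for all relevant $s$. The idea is then to translate this condition into the explicit constraint on $k$ and $b$.

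First I would observe that for the linear reward $R(s) = ks + b$, we have $R(0) = b$, and the condition from Theorem \ref{thm:nece-and-suff-honest-condition} can be rewritten using the parameter $\mu$ defined in \eqref{eq:mu}: it is equivalent to $R(s) < \mu R_0 = \mu b$ for all admissible improvement values $s$, i.e., \eqref{eq:upperbound-R}. Since $R(s) = ks + b$ is non-decreasing in $s$ (as $k \geq 0$), the tightest instance of $R(s) < \mu b$ occurs at the largest possible argument, which is $s = s_m$, the maximum optimization improvement. Hence the condition $R(s) < \mu b$ for all $s \le s_m$ reduces to $k s_m + b < \mu b$, which rearranges immediately to $k < \frac{\mu - 1}{s_m} b$. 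Combined with the standing assumption $k \geq 0$, this gives exactly \eqref{eq:linear_constraint}.

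One subtlety I would address: the original derivation of the bound in Theorem \ref{thm:nece-and-suff-honest-condition} involves evaluating $f(s_1,s_2)$ and $g(s_1,s_2)$ at the boundary arguments (such as $R(0)$, $R(s_1)$, $R(s_2)$, $R(2s_1 - s_2)$, etc.), and here one must confirm that all arguments that appear — in particular $2s_2 - 2s_1$ and $s_2$ themselves — remain within the domain $[0, s_m]$ on which the linear form is posited, so that the monotonicity argument applies cleanly and $R(s) = ks+b$ is the correct expression to substitute throughout. Because $s_1 < s_2 \le s_m$ and $s_1, s_2$ are the actual improvements achieved by the two parties, the relevant arguments $s_2 - s_1$, $2s_1 - s_2$, $2s_2 - 2s_1$ are all at most $s_m$ (using $s_2 < 2s_1$), so this is consistent. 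The statement says "sufficient condition," which is the right hedge: \eqref{eq:linear_constraint} guarantees \eqref{eq:upperbound-R}, hence the security conditions, but I would not claim necessity here since the precise maximum argument and the minimum-progress caveat discussed after Theorem \ref{thm:nece-and-suff-honest-condition} could loosen things.

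The main obstacle is essentially bookkeeping rather than any deep argument: one must be careful that the $\max\{\cdot,\cdot\}$ inside $\mu$ is handled correctly across the two cases of \eqref{eq:mu} — when $\alpha_1 \ge \gamma_1$ and $\alpha_2 \ge 0$, $\mu = +\infty$ and \eqref{eq:linear_constraint} degenerates to simply $k \ge 0$ (any slope works), while in the complementary case $\mu$ is finite and positive (using that the relevant max lies strictly between $0$ and $1$, as noted in the discussion of \eqref{eq:parameter-necessary1}–\eqref{eq:parameter-necessary2}, provided $\eta > 1/2$), so $\frac{\mu-1}{s_m} > 0$ and the interval for $k$ is nonempty. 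I expect the proof to be short: invoke Theorem \ref{thm:nece-and-suff-honest-condition}, rewrite via \eqref{eq:upperbound-R} and \eqref{eq:mu}, substitute $R(s_m) = k s_m + b$ as the worst case by monotonicity, and solve for $k$.
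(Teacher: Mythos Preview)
Your proposal is correct and follows essentially the same route as the paper: invoke Theorem~\ref{thm:nece-and-suff-honest-condition} (linear is concave and differentiable), rewrite the condition as $R(s)<\mu b$, use monotonicity of $R$ to reduce to the worst case $s=s_m$, and solve $ks_m+b<\mu b$ for $k$. Your additional remarks on the domain of the arguments and the $\mu=+\infty$ case are more explicit than the paper's version but do not change the argument.
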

\begin{proof}
According to Theorem \ref{thm:nece-and-suff-honest-condition}, when
$\eta>\frac{1}{2}$, since the linear reward function is concave and
first-order derivable, \eqref{eq:condv1} and \eqref{eq:condv2} are
equivalent to \eqref{eq:nece-and-suff-condition}. When \eqref{eq:linear_constraint}
is satisfied, we have
\[
ks+b\leq ks_{m}+b<\mu b.
\]
When \eqref{eq:parameter-necessary1} and \eqref{eq:parameter-necessary2}
are met, we have $\max\left\{ \frac{\gamma_{1}-\alpha_{1}}{\beta_{1}},-\frac{\alpha_{2}}{\beta_{2}+\gamma_{2}}\right\} <1$,
hence
\[
b>\max\left\{ \frac{\gamma_{1}-\alpha_{1}}{\beta_{1}},-\frac{\alpha_{2}}{\beta_{2}+\gamma_{2}}\right\} \left(ks+b\right),
\]
i.e., \eqref{eq:nece-and-suff-condition} is satisfied. Therefore,
\eqref{eq:linear_constraint} is a sufficient condition to reach the
desired Nash equilibrium when $\eta>\frac{1}{2}$ and the reward function
is linear.
\end{proof}
Theorem \ref{thm:linear} shows that, the slope of the linear reward
function should be upper bounded. This means that a miner should not
get paid too much for improving the solution. Compared with Theorem
\ref{thm:nece-and-suff-honest-condition}, Theorem \ref{thm:linear}
offers a greater margin to guarantee the security conditions against
selfish miners, making the framework more robust, and its meaning
can be interpreted more obviously. 
\begin{figure}
\begin{raggedright} \centering\includegraphics[width=0.42\textwidth]{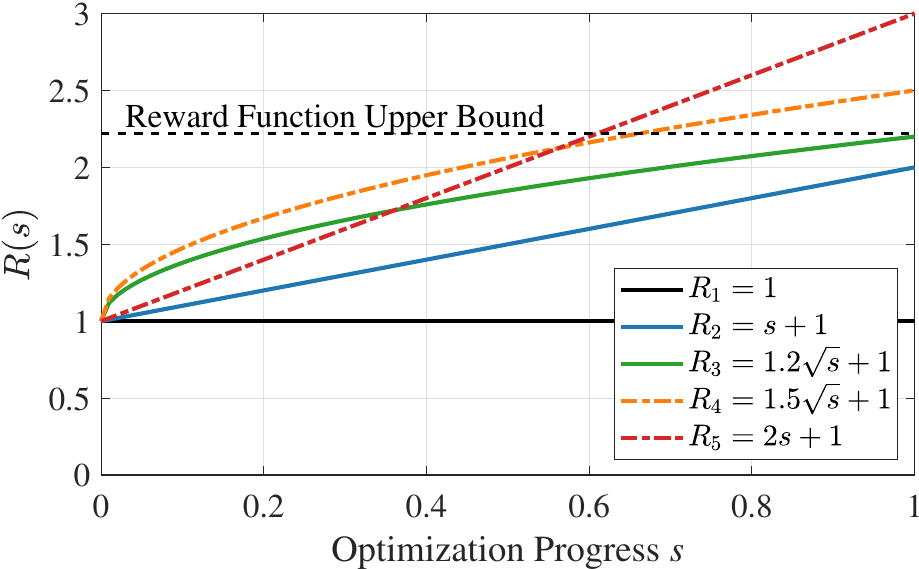}\end{raggedright}
\centering{}\vspace{-0.1cm}

\caption{The example reward functions.\label{fig:reward_functions}}
\vspace{-0.5cm}
\end{figure}

\subsection{Examples}

This subsection demonstrates some design examples: $R_{1}=1$, $R_{2}=s+1$,
$R_{3}=1.2\sqrt{s}+1$, $R_{4}=1.5\sqrt{s}+1$ and $R_{5}=2s+1$,
as shown in Fig. \ref{fig:reward_functions}. Note that, $R_{1}$,
$R_{2}$ and $R_{3}$ satisfy \eqref{eq:nece-and-suff-condition}
while $R_{4}$ and $R_{5}$ do not, as they fail to meet \eqref{eq:upperbound-R}
\footnote{We set $p_{0}=0.005$, $q_{0}=0.001$, $s_{1}=1$ and $s_{2}=1.0001$.
Based on this setting, it can be numerically computed that $\max\left\{ \frac{\gamma_{1}-\alpha_{1}}{\beta_{1}},-\frac{\alpha_{2}}{\beta_{2}+\gamma_{2}}\right\} =0.4412$
and $\mu=2.2663$.}. We show the total reward difference between honest and selfish strategies
in Fig. \ref{fig:simulation4} (see Section \ref{sec:Simulations-and-Discussions}).
One can see that, when the selfish mining power ratio, denoted as
$\lambda_{s}$, approaches 0.5, the differences $\pi_{honest}-\pi_{selfish}$
for $R_{4}$ and $R_{5}$ drop below zero, while the differences for
$R_{1}$, $R_{2}$ and $R_{3}$ always stay above zero. Hence, $R_{1}$,
$R_{2}$ and $R_{3}$ are feasible reward functions in this setting,
while $R_{4}$ and $R_{5}$ are vulnerable to selfish behaviors, which
aligns with Theorem \ref{thm:nece-and-suff-honest-condition}. Intuitively,
a flatter reward function (e.g., a fixed reward) can satisfy Theorem
\ref{thm:nece-and-suff-honest-condition} more easily. However, as
shown in Fig. \ref{fig:secure-region}, a flatter reward function
has higher demands on the security overhead ratio $\eta$ and thus
smaller secure regions, which provides less protection against solution
plagiarism. This indicates a trade-off when tackling solution plagiarism
and chain formation. 

\section{Security Against Maliciousness \label{sec:Long-range-Attack-Analysis}}

\subsection{Necessary Condition against Long-range Attacks}

In addition to selfish miners, malicious miners also pose a threat
to blockchain security and need to be taken into account. Unlike selfish
miners, the malicious miner may attack even at the expense of its
own interests, so the reward function design in Section \ref{sec:Reward-Function-Design}
is powerless against such attacks. In this section, we show that optimization-based
PoUW is secure in the presence of malicious miners, and derive the
security condition against maliciousness.

Consider a powerful malicious miner with a computational power ratio
$\lambda_{1}$, while the computational power ratio of the rest of
the network is $\lambda_{2}=1-\lambda_{1}$. This subsection takes
the long-range attack as an example, in which the goal of the malicious
miner is to generate a longer private chain from $k$ blocks deep.
If the length of the private chain ever exceeds that of the main chain,
the attacker publishes the private chain so that data in the original
$k$ blocks and the following blocks are altered. When chasing the
main chain, the malicious miner can copy the solutions to optimization
problems from existing blocks in the main chain and only needs to
compute PoW, while honest miners need to solve the optimization problem
and the PoW puzzle to generate a block. This makes the malicious miner
more advantageous than in a PoW blockchain. 

Denote the random variables $T_{2}^{o}$ and $T_{2}^{p}$ as the number
of rounds that the honest miner finds a better solution and solves
the PoW puzzle, respectively, and $T_{1}^{p}$ as the number of rounds
that the malicious miner solves PoW. Intuitively, for a secure PoUW
against long-range attacks, the expected time for the honest majority
to generate a block must be smaller than the expected time that the
malicious miner plagiarizes and generates a private block, i.e., 
\begin{equation}
E\left(T_{2}^{o}+T_{2}^{p}\right)<E\left(T_{1}^{p}\right).\label{eq:long-range-expectations}
\end{equation}
Therefore, we immediately derive the following condition against long-range
attacks.
\begin{prop}
\label{prop:Necessary-Long-range-security} A necessary condition
for the framework to guarantee security against long-range attacks
is given by
\begin{equation}
\lambda_{1}<\frac{\eta}{1+\eta},\label{eq:nece-long-range-condition}
\end{equation}
which is also an upper bound on the maximum tolerated adversarial
power.
\end{prop}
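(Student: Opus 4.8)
The plan is to convert the timing inequality \eqref{eq:long-range-expectations} into the stated bound on $\lambda_1$ by writing down the three expected hitting times explicitly and then invoking the definition of $\eta$ in \eqref{eq:eta}.

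First I would identify the distributions of $T_2^{o}$, $T_2^{p}$, and $T_1^{p}$. Under the discrete-time network model, in each round the honest majority improves the solution independently with probability $p_2=\lambda_2 p_0$, lands a qualifying hash with probability $q_2=\lambda_2 q_0$, and the malicious miner lands a qualifying hash with probability $q_1=\lambda_1 q_0$. Hence each of these stopping times is geometric, so $E(T_2^{o})=\tfrac{1}{\lambda_2 p_0}$, $E(T_2^{p})=\tfrac{1}{\lambda_2 q_0}$, and $E(T_1^{p})=\tfrac{1}{\lambda_1 q_0}$. Since an honest miner must first secure a valid solution and only afterwards perform hash trials — the two actions being mutually exclusive within a round — the honest block time is the sum of the two phases, giving $E(T_2^{o}+T_2^{p})=\tfrac{1}{\lambda_2 p_0}+\tfrac{1}{\lambda_2 q_0}$.

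Next I would substitute into \eqref{eq:long-range-expectations} and clear denominators. Multiplying through by $q_0$ yields $\tfrac{1}{\lambda_2}\!\left(\tfrac{q_0}{p_0}+1\right)<\tfrac{1}{\lambda_1}$, and the key simplification is $\tfrac{q_0}{p_0}+1=\tfrac{p_0+q_0}{p_0}=\tfrac1\eta$ by \eqref{eq:eta}. This collapses the inequality to $\lambda_1<\eta\lambda_2=\eta(1-\lambda_1)$, which rearranges to $\lambda_1(1+\eta)<\eta$, i.e. \eqref{eq:nece-long-range-condition}. Because the right-hand side is precisely the largest $\lambda_1$ for which the honest chain extends faster in expectation than the adversary's plagiarized chain, it simultaneously serves as the claimed upper bound on tolerable adversarial power.

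The only delicate point — more a caveat than a computational obstacle — is justifying that \eqref{eq:long-range-expectations} is genuinely necessary rather than sufficient: even when the honest chain outpaces the adversary on average, an adversary starting $k$ blocks behind can still overtake the main chain with positive probability, in the spirit of Nakamoto's gambler's-ruin estimate. A full sufficiency statement would additionally require this catch-up probability to decay (e.g.\ geometrically in $k$), which is exactly why the result is phrased as a necessary condition; I would flag this explicitly but not carry out the catch-up analysis here.
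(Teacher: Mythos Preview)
Your proposal is correct and follows essentially the same route as the paper: identify $T_2^{o},T_2^{p},T_1^{p}$ as geometric with parameters $p_2,q_2,q_1$, substitute their expectations into \eqref{eq:long-range-expectations}, and simplify via \eqref{eq:eta} to obtain \eqref{eq:nece-long-range-condition}. Your algebraic detail and the explicit necessity-versus-sufficiency discussion go a bit beyond what the paper writes out, but the argument is the same.
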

\begin{proof}
If \eqref{eq:long-range-expectations} is not satisfied, for any $k$,
the attacker can always catch up and the framework is insecure. Thus
\eqref{eq:long-range-expectations} is a necessary condition. Since
$T_{2}^{o},T_{2}^{p}$ and $T_{1}^{p}$ are independent geometrically
distributed random variables with hit probability $p_{2},q_{2}$ and
$q_{1}$, respectively, their expectations are $\frac{1}{p_{2}},\frac{1}{q_{2}}$
and $\frac{1}{q_{1}}$, respectively. Substituting them into \eqref{eq:long-range-expectations}
yields \eqref{eq:nece-long-range-condition}.
\end{proof}
Proposition \ref{prop:Necessary-Long-range-security} gives a necessary
security condition against long-range attacks. Though \eqref{eq:nece-long-range-condition}
alone is not sufficient to guarantee long-range attack resistance,
the violation of such a condition means the optimization-based PoUW
is vulnerable. Given a security overhead ratio $\eta$, \eqref{eq:nece-long-range-condition}
provides an upper bound for the maximum tolerated adversarial power. 

\subsection{Security Condition against Maliciousness }

Obviously, it is insufficient to consider long-range attacks. To reveal
the security condition against maliciousness, comprehensive malicious
attacks need to be taken into account. Recall that in a secure PoW
blockchain system against malicious attacks, honest miners need to
possess more than half of the computational power. That is, the probability
that honest miners generate the next block before the adversary, denoted
as $P_{h}$, should be greater than $\frac{1}{2}$, and the chain
will have probabilistic convergence to the triumph of the honest chain.
Take the long-range attack for instance. Consider a gambler's ruin
problem on integers. The random walk starts at $k$ (the attack depth)
and the current location corresponds to the main chain's number of
leads, i.e., the number of blocks the malicious miner still needs
to generate in order to catch up with the main chain. In each move,
the walk shifts right with probability $P_{h}$ or left with probability
$1-P_{h}$, and the long-range attack is successful if -1 is ever
reached. From the gambler's ruin model, we know that the long-range
attack success probability, denoted by $P_{lr}\left(k;P_{h}\right)$,
is
\begin{equation}
P_{lr}\left(k;P_{h}\right)=\begin{cases}
\rho^{k+1}, & \rho<1,\\
1, & \rho\geq1,
\end{cases}\label{eq:p_lr_upper_bound}
\end{equation}
in which $\rho=\frac{1-P_{h}}{P_{h}}$. If $\rho<1$, $P_{lr}\left(k;P_{h}\right)$
vanishes exponentially as $k$ increases, which implies good security
features. If $\rho\geq1$, long-range attacks can always destroy the
considered blockchain system. Therefore, $\rho<1$, or equivalently,
$P_{h}>\frac{1}{2}$ is the condition that the long-range attack success
probability drops exponentially with $k$, and therefore guarantees
probabilistic convergence to security against long-range attacks.
Although we use the long-range attack as an example, the above principle
also holds for other malicious attacks such as double-spending. 

Based on the above analysis, we can derive the security condition
against maliciousness in the following theorem.
\begin{thm}[Security Condition against Maliciousness]
\label{thm:security_condition_against_maliciousness} When $\eta>1/2$,
the framework is secure against malicious miners under the following
condition:
\begin{align}
\left(1-q_{1}\right)^{2}\left(1-\frac{q_{1}p_{2}}{p_{2}-q_{2}}\frac{\left(1-q_{2}\right)^{2}}{q_{1}+q_{2}-q_{1}q_{2}}\right)\label{eq:security_condition_against_maliciousness}\\
+\left(1-q_{1}\right)^{2}\frac{q_{1}q_{2}}{p_{2}-q_{2}}\frac{\left(1-p_{2}\right)^{2}}{q_{1}+p_{2}-q_{1}p_{2}} & >\frac{1}{2}.\nonumber 
\end{align}
\end{thm}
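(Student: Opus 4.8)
The plan is to identify the left-hand side of \eqref{eq:security_condition_against_maliciousness} with the quantity $P_h$ introduced in the gambler's-ruin discussion preceding the theorem --- the probability that the honest majority extends the chain before the malicious miner does in one step of the race --- so that the stated inequality is precisely the requirement $P_h>\tfrac12$ (equivalently $\rho=(1-P_h)/P_h<1$) already shown there to make the long-range-attack success probability $P_{lr}(k;P_h)=\rho^{k+1}$ decay exponentially in the attack depth $k$, and hence to guarantee probabilistic convergence to the honest branch against long-range attacks and, by the same reasoning, against double-spending and other majority-type attacks. All that remains is therefore to compute $P_h$ in closed form.

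To model one step of the race I would exploit the two asymmetries between the parties in this framework. To append a block the honest side must first obtain a solution better than the current benchmark and only afterwards search for a valid nonce; since these two activities are mutually exclusive within a round and succeed with per-round probabilities $p_2$ and $q_2$, the honest inter-block time is $T_{2}^{o}+T_{2}^{p}$, a sum of two independent geometric random variables. The malicious miner, able to plagiarize the solutions already posted on the main chain, bypasses the search stage and need only solve PoW, so its inter-block time is the single geometric variable $T_{1}^{p}$ with parameter $q_1$. Since PoW is memoryless the adversary's residual effort is irrelevant, and letting the honest side restart its two-stage task from scratch after each honest block can only understate its winning chance; consequently
\[
P_h=\Pr\!\left(T_{2}^{o}+T_{2}^{p}<T_{1}^{p}\right)
\]
is a legitimate (and conservative) value for the step probability of the random walk, so that establishing $P_h>\tfrac12$ is sufficient.

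The evaluation of $\Pr(T_{2}^{o}+T_{2}^{p}<T_{1}^{p})$ is then direct. Conditioning on $T_{2}^{o}+T_{2}^{p}=m$ and using $\Pr(T_{1}^{p}>m)=(1-q_1)^m$ gives $P_h=\sum_{m\ge2}\Pr(T_{2}^{o}+T_{2}^{p}=m)(1-q_1)^m$; convolving the two geometrics yields $\Pr(T_{2}^{o}+T_{2}^{p}=m)=\frac{p_2q_2}{p_2-q_2}\big((1-q_2)^{m-1}-(1-p_2)^{m-1}\big)$ --- and here the hypothesis on $\eta$ enters, since it forces $p_0>q_0$ and hence $p_2>q_2$, so $p_2-q_2\neq0$ and the two geometric rates are distinct --- and summing the two resulting geometric series in $m$, with $1-(1-q_1)(1-q_2)=q_1+q_2-q_1q_2$ and $1-(1-q_1)(1-p_2)=q_1+p_2-q_1p_2$, gives
\[
P_h=(1-q_1)^2\,\frac{p_2q_2}{p_2-q_2}\left(\frac{1-q_2}{q_1+q_2-q_1q_2}-\frac{1-p_2}{q_1+p_2-q_1p_2}\right).
\]
A short algebraic rearrangement --- grouping the $q_2$ contribution with the $q_1(1-q_2)$ contribution over the first denominator, and symmetrically over the second, each collapsing once the denominator cancels --- then shows this closed form is identical to the left-hand side of \eqref{eq:security_condition_against_maliciousness}. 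Hence \eqref{eq:security_condition_against_maliciousness} is exactly $P_h>\tfrac12$, and the gambler's-ruin argument closes the proof.

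I expect the only real subtlety to be modeling rather than algebra: one must argue convincingly that replacing the genuine honest inter-block process --- which is not memoryless, being a two-phase search-then-PoW procedure --- by the fresh-restart variable $T_{2}^{o}+T_{2}^{p}$ really does lower-bound the per-step $P_h$ experienced by the random walk, so that $P_h>\tfrac12$ is a bona fide sufficient condition. A minor point to settle cleanly is the tie event $T_{2}^{o}+T_{2}^{p}=T_{1}^{p}$: the strict inequality assigns it to the adversary, the conservative choice, consistent with resolving simultaneous-block situations against the honest party. The convolution, the geometric summation, and the identification with \eqref{eq:security_condition_against_maliciousness} are otherwise routine.
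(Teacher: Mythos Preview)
Your proposal is correct and follows essentially the same route as the paper: both identify the left-hand side of \eqref{eq:security_condition_against_maliciousness} with $P_h=\Pr(T_2^o+T_2^p<T_1^p)$ and then invoke the gambler's-ruin reduction. The only cosmetic difference is the order of conditioning --- the paper conditions on $T_1^p=t_1$ and uses the CDF of $T_2^o+T_2^p$, whereas you condition on $T_2^o+T_2^p=m$ via the convolution PMF and use the tail of $T_1^p$ --- and your intermediate closed form $(1-q_1)^2\frac{p_2q_2}{p_2-q_2}\bigl(\frac{1-q_2}{q_1+q_2-q_1q_2}-\frac{1-p_2}{q_1+p_2-q_1p_2}\bigr)$ is indeed algebraically identical to the paper's display.
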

\begin{proof}
Consider a PoW blockchain where the honest majority generates a block
before the malicious miner with probability $P\left(T_{2}^{o}+T_{2}^{p}<T_{1}^{p}\right)$.
If $P\left(T_{2}^{o}+T_{2}^{p}<T_{1}^{p}\right)>\frac{1}{2}$, this
PoW blockchain is secure against malicious miners, and so is our framework.
To calculate $P\left(T_{2}^{o}+T_{2}^{p}<T_{1}^{p}\right)$, we first
compute $P\left(T_{2}^{o}+T_{2}^{p}\leq t_{1}-1\right)$, which is
given by
\begin{align}
 & P\left(T_{2}^{o}+T_{2}^{p}\leq t_{1}-1\right)\label{eq:prob_t2o+t2p}\\
 & \overset{(a)}{=}\sum_{t_{2}=1}^{t_{1}-2}P\left(T_{2}^{o}=t_{2}\right)P\left(T_{2}^{p}\leq t_{1}-t_{2}-1\right)\nonumber \\
 & \overset{(b)}{=}\sum_{t_{2}=1}^{t_{1}-2}\left(1-p_{2}\right)^{t_{2}-1}p_{2}\left(1-\left(1-q_{2}\right)^{t_{1}-t_{2}-1}\right)\nonumber \\
 & =1-\frac{p_{2}\left(1-q_{2}\right)^{t_{1}-1}-q_{2}\left(1-p_{2}\right)^{t_{1}-1}}{p_{2}-q_{2}},\nonumber 
\end{align}
where (a) uses the law of total probability, and (b) is because $T_{2}^{o}$
and $T_{2}^{p}$ are geometrically distributed with hit probability
$p_{2}$ and $q_{2}$, respectively. Note that $\eta>1/2$ guarantees
$p_{2}-q_{2}\neq0$. Now we have \small
\begin{align*}
 & P\left(T_{2}^{o}+T_{2}^{p}<T_{1}^{p}\right)\\
 & \overset{(a)}{=}\sum_{t_{1}=3}^{+\infty}P\left(T_{1}^{p}=t_{1}\right)P\left(T_{2}^{o}+T_{2}^{p}\leq t_{1}-1\right)\\
 & \overset{(b)}{=}\sum_{t_{1}=3}^{+\infty}\left(1-q_{1}\right)^{t_{1}-1}q_{1}\left(1-\frac{p_{2}\left(1-q_{2}\right)^{t_{1}-1}-q_{2}\left(1-p_{2}\right)^{t_{1}-1}}{p_{2}-q_{2}}\right)\\
 & =\left(1-q_{1}\right)^{2}\left(1-\frac{q_{1}p_{2}}{p_{2}-q_{2}}\frac{\left(1-q_{2}\right)^{2}}{q_{1}+q_{2}-q_{1}q_{2}}\right)\\
 & +\left(1-q_{1}\right)^{2}\frac{q_{1}q_{2}}{p_{2}-q_{2}}\frac{\left(1-p_{2}\right)^{2}}{q_{1}+p_{2}-q_{1}p_{2}},
\end{align*}
 \normalsize where (a) uses the law of total probability, and (b)
is due to \eqref{eq:prob_t2o+t2p} and the geometric distribution
of $T_{1}^{p}$. Substituting the above result into $P\left(T_{2}^{o}+T_{2}^{p}<T_{1}^{p}\right)>\frac{1}{2}$
yields \eqref{eq:security_condition_against_maliciousness}.
\end{proof}
Theorem \ref{thm:security_condition_against_maliciousness} gives
the condition that guarantees probabilistic convergence to security
against malicious miners. Take long-range attacks for instance. As
long as \eqref{eq:security_condition_against_maliciousness} is satisfied,
the success probability of a long-range attack shrinks exponentially
with the attack depth (pointed out by \eqref{eq:p_lr_upper_bound}),
and our PoUW framework is secure. This is because \eqref{eq:security_condition_against_maliciousness}
transforms the computational power ratio in PoUW into an equivalent
computing power in PoW from the perspective of block generation probability.
Though the malicious miner can gain a certain advantage through solution
plagiarism, optimization-based PoUW is secured if the equivalent honest
computing power remains the majority. It also states that in the optimization-based
PoUW, the maximum tolerated adversarial power ratio is reduced due
to an enhanced attacking strategy (solution plagiarism). Furthermore,
Theorem \ref{thm:security_condition_against_maliciousness} offers
an approach that allows previous security analyses on PoW to be applied
to optimization-based PoUW.  
\begin{figure}
\begin{raggedright} \centering\includegraphics[width=0.43\textwidth]{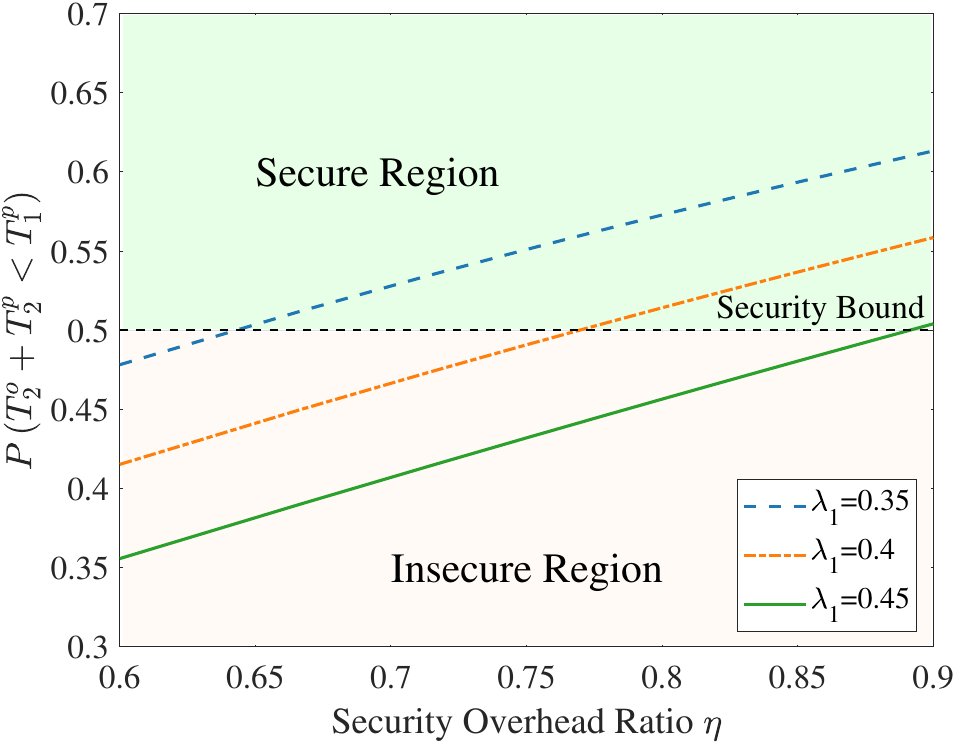}\end{raggedright}
\centering{}\vspace{-0.1cm}

\caption{Security condition against maliciousness in Theorem \ref{thm:security_condition_against_maliciousness} with respect to security
overhead ratio $\eta$. Optimization-based PoUW is secure against
malicious miners if $P\left(T_{2}^{o}+T_{2}^{p}<T_{1}^{p}\right)>\frac{1}{2}$.
\label{fig:long-range}}
\vspace{-0.7cm}
\end{figure}
 
\begin{figure*}
\begin{raggedright} \hfill{}\centering\subfigure[]{ \includegraphics[width=0.43\textwidth]{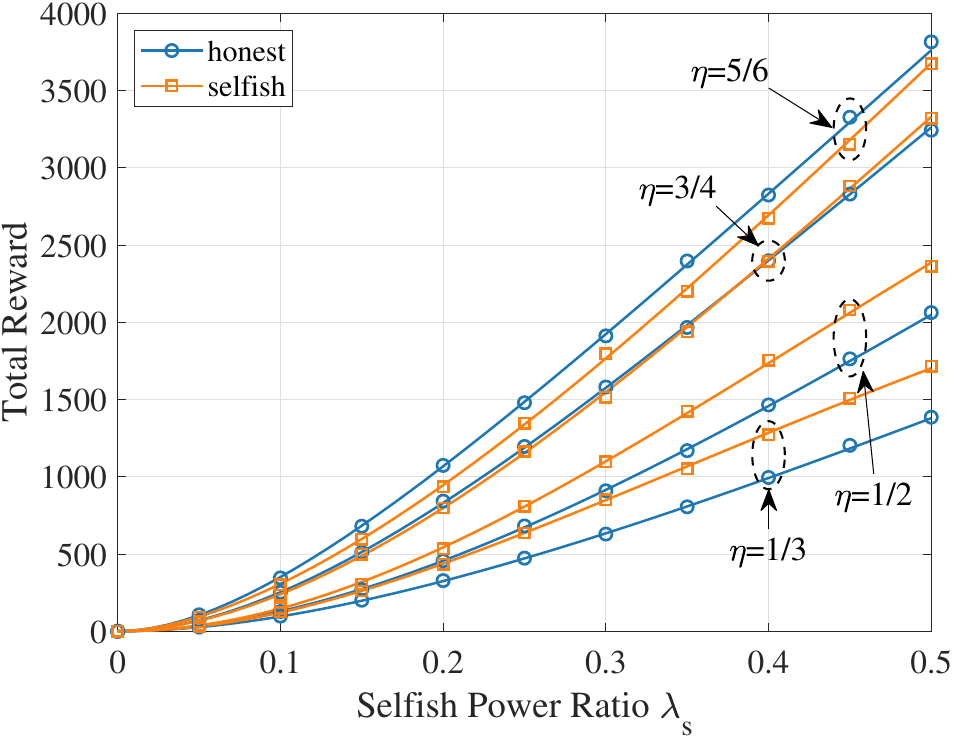}}
\hfill{}\subfigure[]{ \includegraphics[width=0.43\textwidth]{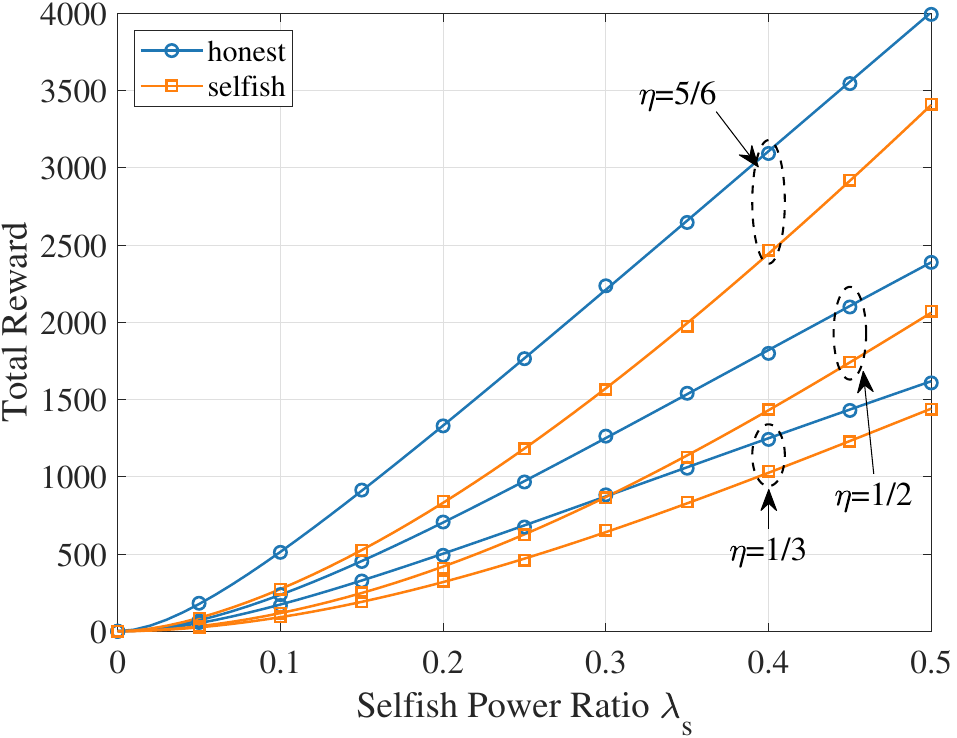}}\hfill{}

\end{raggedright} \centering {}\vspace{-0.3cm}

\caption{Total reward of the selfish miner's honest and selfish strategies
with variant selfish power ratio $\lambda_{s}$. (a) The selfish miner
has a smaller optimization progress. (b) The selfish miner has a larger
optimization progress. \label{fig:simulation1}}
\vspace{-0.3cm}
\end{figure*}

In practical systems, we have $p_{1},p_{2},q_{1},q_{2}\ll1$. In this
case, we can simplify Theorem \ref{thm:security_condition_against_maliciousness}
into the condition regarding the adversarial power ratio $\lambda_{1}$
and the security overhead ratio $\eta$, as shown in the following
corollary.
\begin{cor}
\label{col: approx}When $p_{1},p_{2},q_{1},q_{2}\ll1$, the security
condition in Theorem \ref{thm:security_condition_against_maliciousness}
can be approximated by
\begin{align}
\frac{1}{2}-\frac{\lambda_{1}}{2-\frac{1}{\eta}}+\frac{1}{\left(\frac{\eta}{1-\eta}-1\right)\left(1+\frac{\eta}{1-\eta}\left(\frac{1}{\lambda_{1}}-1\right)\right)} & \geq0.\label{eq:p_approx}
\end{align}
\end{cor}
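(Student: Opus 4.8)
The plan is to linearize the exact expression for $P\left(T_2^o+T_2^p<T_1^p\right)$ obtained inside the proof of Theorem \ref{thm:security_condition_against_maliciousness} in the small parameters $p_1,p_2,q_1,q_2$, and then re-express the result through $\lambda_1$ and $\eta$. The first step is to discard the multiplicative factors $\left(1-q_1\right)^2,\left(1-q_2\right)^2,\left(1-p_2\right)^2$ (each equal to $1+O(q_i)$ or $1+O(p_2)$) and to replace $q_1+q_2-q_1q_2$ by $q_1+q_2$ and $q_1+p_2-q_1p_2$ by $q_1+p_2$ (dropping the relative $O(q_1)$ corrections). After this, the condition \eqref{eq:security_condition_against_maliciousness} reduces to
\[
1-\frac{q_1p_2}{\left(p_2-q_2\right)\left(q_1+q_2\right)}+\frac{q_1q_2}{\left(p_2-q_2\right)\left(q_1+p_2\right)}>\frac{1}{2}.
\]

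Next I would substitute the model relations $q_1=\lambda_1q_0$, $q_2=\lambda_2q_0$, $p_2=\lambda_2p_0$ together with $\lambda_1+\lambda_2=1$. Several factors cancel: $q_1+q_2=q_0$ and $p_2-q_2=\lambda_2\left(p_0-q_0\right)$, so the middle term collapses to $\lambda_1p_0/\left(p_0-q_0\right)$, while the last term collapses to $\lambda_1q_0^2/\left[\left(p_0-q_0\right)\left(\lambda_1q_0+\lambda_2p_0\right)\right]$. Writing $r:=p_0/q_0$ and dividing numerators and denominators by the appropriate powers of $q_0$, these become $\lambda_1r/\left(r-1\right)$ and $\lambda_1/\left[\left(r-1\right)\left(\lambda_1+\lambda_2r\right)\right]$; using $\lambda_2/\lambda_1=1/\lambda_1-1$ turns the latter into $1/\left[\left(r-1\right)\left(1+r\left(1/\lambda_1-1\right)\right)\right]$.

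Finally I would translate back to $\eta$ via $r=p_0/q_0=\eta/\left(1-\eta\right)$, which gives $\lambda_1r/\left(r-1\right)=\lambda_1/\left(2-1/\eta\right)$ because $r/\left(r-1\right)=\eta/\left(2\eta-1\right)$, and then rearrange the inequality to isolate $0$ on the right-hand side. This yields exactly \eqref{eq:p_approx}, with the strict inequality relaxed to $\geq$ consistently with the first-order nature of the approximation.

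I do not expect a substantive obstacle; the computation is routine algebra driven by cancellations. The one point needing care is order bookkeeping: one must verify that the retained correction term $\lambda_1q_0^2/\left[\left(p_0-q_0\right)\left(\lambda_1q_0+\lambda_2p_0\right)\right]$ is genuinely $O(1)$ — since $q_0^2/\left(q_0p_0\right)=1/r=O(1)$ when $p_0$ and $q_0$ are comparable, it cannot be dropped — whereas every term discarded in the linearization carries an extra factor of $p_0$ or $q_0$ and hence vanishes in the considered regime. Making this consistency explicit is what turns the heuristic simplification into a valid approximation statement.
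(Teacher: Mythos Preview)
Your proposal is correct and follows exactly the route the paper indicates: linearize \eqref{eq:security_condition_against_maliciousness} in the small parameters, substitute $p_i=\lambda_i p_0$, $q_i=\lambda_i q_0$, and rewrite via $\eta$. The paper's own proof is a one-line sketch of precisely these steps; you have simply carried out the algebra (including the useful intermediate variable $r=p_0/q_0$ and the order-of-magnitude check that the third term is genuinely $O(1)$) that the paper leaves to the reader.
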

\begin{proof}
\eqref{eq:p_approx} is derived using $p_{1},p_{2},q_{1},q_{2}\ll1$,
and then substitutes $p_{i}=\lambda_{i}p_{0}$, $q_{i}=\lambda_{i}q_{0}$
and \eqref{eq:eta} into \eqref{eq:security_condition_against_maliciousness}.
\end{proof}
Corollary \ref{col: approx} simplifies Theorem \ref{thm:security_condition_against_maliciousness}
in a bound represented by the adversarial power ratio $\lambda_{1}$
and the security overhead ratio $\eta$. As $\lambda_{1}$ increases,
the bound becomes less easily reached, and a larger $\eta$ is needed
to secure the optimization-based PoUW. Conversely, a larger $\eta$
can prevent a higher proportion of adversarial power from threatening
the blockchain. From numerical results, the approximation in Corollary
\ref{col: approx} is quite close to the accurate bound in Theorem
\ref{thm:security_condition_against_maliciousness}. 

Fig. \ref{fig:long-range} demonstrates $P\left(T_{2}^{o}+T_{2}^{p}<T_{1}^{p}\right)$
with respect to variant security overhead ratio $\eta$. It is observed
that, for a given maximum tolerated adversarial power ratio $\lambda_{1}$,
we can find a minimum required security overhead ratio $\eta$ that
guarantees security against maliciousness. \eqref{eq:security_condition_against_maliciousness}
forms another important bound that limits the security region, which
we will show in Fig. \ref{fig:secure-region}.   
\begin{figure*}
\begin{raggedright}\hfill{} \centering\subfigure[]{ \includegraphics[width=0.43\textwidth]{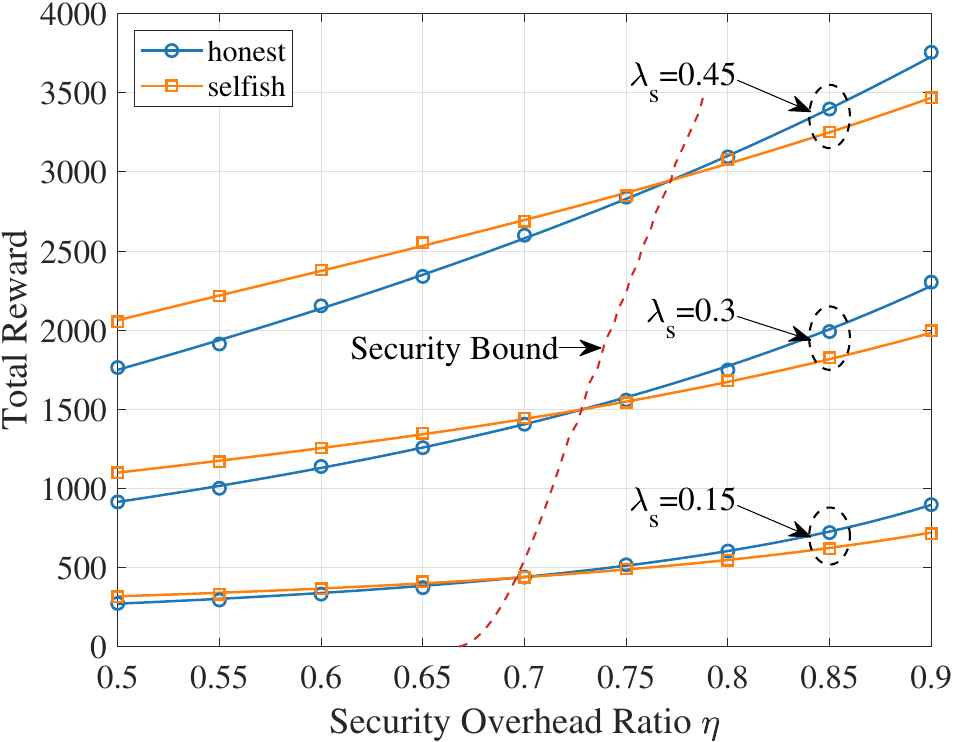}}
\hfill{}\subfigure[]{ \includegraphics[width=0.43\textwidth]{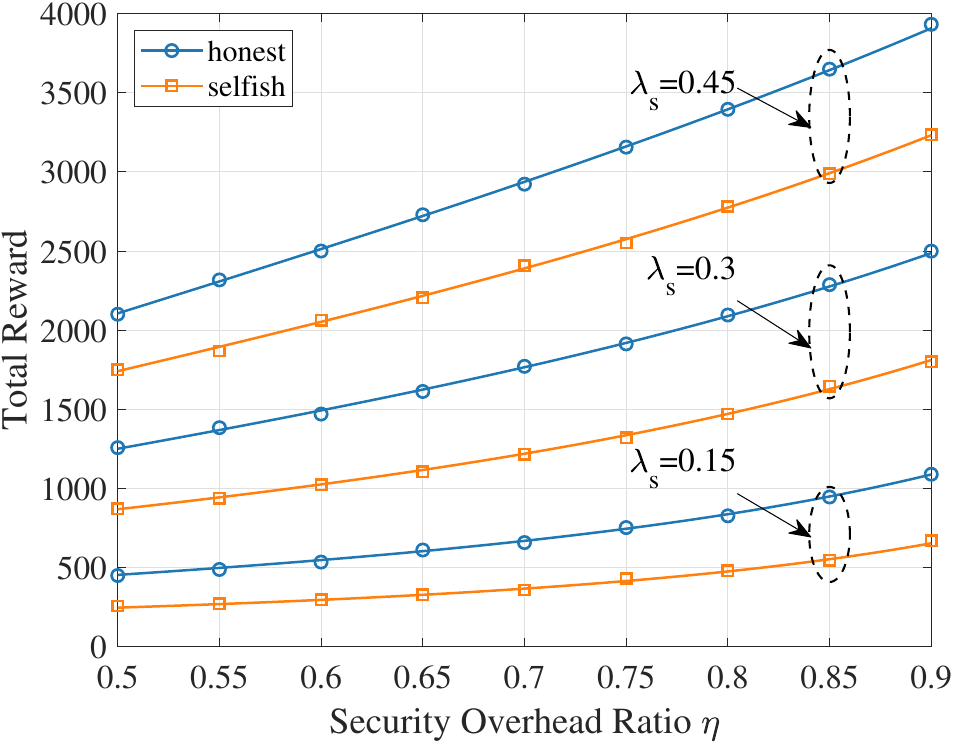}}\hfill{}

\end{raggedright} \centering {}\vspace{-0.3cm}

\caption{Total reward of the selfish miner's honest and selfish strategies
with variant security overhead ratio $\eta$. (a) The selfish miner
has a smaller optimization progress. (b) The selfish miner has a larger
optimization progress. \label{fig:simulation2}}
\vspace{-0.3cm}
\end{figure*}
 
\begin{figure}
\begin{raggedright} \centering\includegraphics[width=0.43\textwidth]{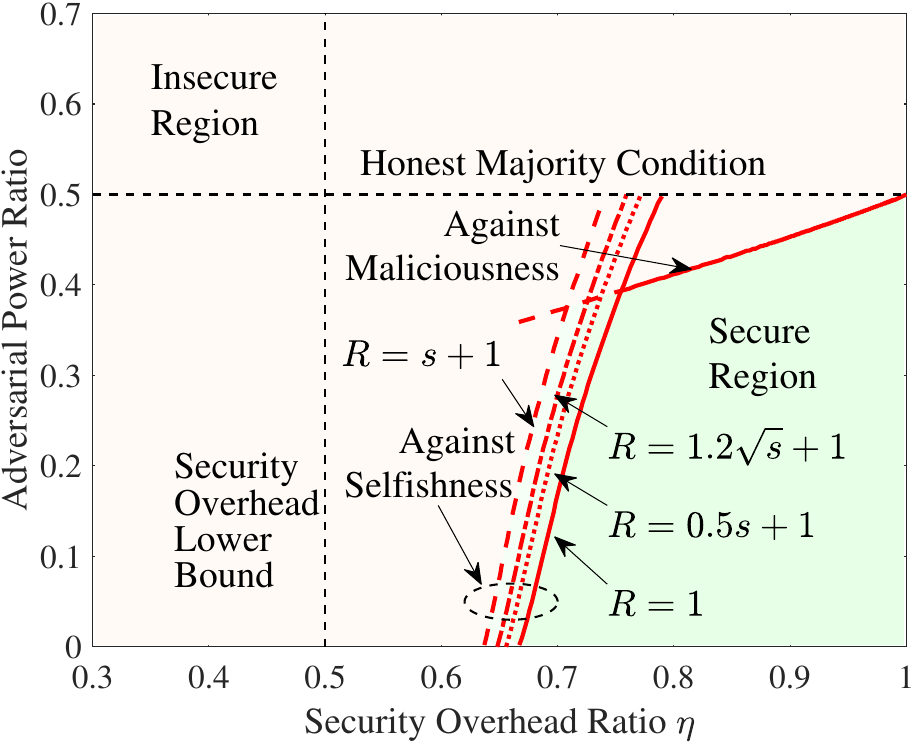}\end{raggedright}
\centering{}\vspace{-0.1cm}

\caption{Illustration of the secure boundaries against selfishness and maliciousness.\label{fig:secure-region}}
\vspace{-0.7cm}
\end{figure}

\section{Simulations and Discussions \label{sec:Simulations-and-Discussions}}
This section presents simulation results to verify and support our
results. For all the figures in this section, lines and markers represent
the analytical and simulation results, respectively. The simulations
are run in a round-based manner by dividing time into fixed-length
intervals \cite{Garay2015}. The experimental parameters are summarized in Table \ref{tab:parameter}. In Fig. \ref{fig:simulation1}, we set $s_{1}:s_{2}=2:3$.
The probability of finding a valid PoW nonce in a round, $q_{0}$,
is set as 0.001, and the probability to find a better solution, $p_{0}$,
is set as 0.0005, 0.001, 0.003, and 0.005, respectively. The block
reward is set as 1, indicating that no extra rewards are offered for
optimization progress. Each of the above system setups is run for
10 million rounds, and we show the total reward of the selfish miner
using honest or selfish strategies under variant selfish computational
power ratio $\lambda_{s}$ in Fig. \ref{fig:simulation1}. We see
that, if the selfish miner has a smaller optimization progress (Fig.
\ref{fig:simulation1}(a)), when the security overhead ratio $\eta$
is low, the selfish strategy (FS) beats honest mining in its total
reward for any $\lambda_{s}\leq0.5$, and the framework is insecure.
As $\eta$ grows larger (see the curves when $\eta=3/4$), the selfish
strategy outperforms honest mining when $\lambda_{s}$ exceeds a certain
value. Finally, when $\eta$ is large enough (see the curves when
$\eta=5/6$), honest mining yields a better payoff for any $\lambda_{s}\leq0.5$,
and the protocol is secure. This verifies Theorem \ref{thm:eta>1/2}
that the framework is only secured when $\eta>\frac{1}{2}$. On the
other hand, when the selfish miner has a larger optimization progress,
with a constant reward function, the total reward for honest mining
is always higher than that of the selfish strategy (IF), regardless
of the value of $\lambda_{s}$ and $\eta$. This indicates that, the
key to preventing the selfish miner from deliberately forking when
it has a larger optimization progress is insensitive to $\lambda_{s}$
and $\eta$, but concerns the reward function design (We will show
that in Fig. \ref{fig:simulation3}). Note that, we do not present
any point when $\lambda_{s}>0.5$, since the blockchain system is
vulnerable to the majority attack. 

\begin{table}[]
    {
    \caption{Experimental parameter table. The top ones are universal parameters.} \label{tab:parameter}
    \centering
    \resizebox{\columnwidth}{!}{
    \begin{tabular}{|c|c|c|}
    \hline
      Symbol   & Meaning & Value\\ \hline 
      \hline
        $s_1$  & Miner 1's optimization progress & 1 \\ \hline

    $q_0$  & Probability of finding a valid nonce per round & 0.001 \\ \hline    
      \hline
      \multicolumn{3}{|c|}{Fig. \ref{fig:simulation1}} \\ \hline

    $p_0$  & Probability of finding a better solution per round & [0.0005,0.005] \\ \hline
    $s_2$  & Miner 2's optimization progress & 1.5 \\ \hline
    $\lambda_s$ & Selfish computing power ratio & [0,0.5] \\ 
    \hline 
    $R$ & Block reward &  1 \\ \hline   \hline
    \multicolumn{3}{|c|}{Fig. \ref{fig:simulation2}} \\ \hline
    $\eta$ & Security overhead ratio & 0.5-0.9\\ \hline
    $p_0$  & Probability of finding a better solution per round & change w.r.t. $\eta$ \\ \hline
    $s_2$  & Miner 2's optimization progress & 1.5 \\ \hline
    $\lambda_s$ & Selfish computing power ratio & 0.15,0.3,0.45 \\ \hline
    $R$ & Block reward &  1 \\ \hline 
    \hline
    \multicolumn{3}{|c|}{Fig. \ref{fig:simulation3}} \\ \hline
    $p_0$  & Probability of finding a better solution per round & 0.005 \\ \hline
    $s_2$  & Miner 2's optimization progress & 1.5 \\ \hline
    $k$ & Block reward slope & 0,2,4,8 \\ \hline
    $R$ & Block reward & $ks+1$ \\ \hline
    \hline
    \multicolumn{3}{|c|}{Fig. \ref{fig:simulation3c}} \\ \hline
    $p_0$  & Probability of finding a better solution per round & 0.005 \\ \hline
    $s_2$  & Miner 2's optimization progress & 1.1,1.5 \\ \hline
    $k$ & Block reward slope & [0,10] \\ \hline
    $R$ & Block reward & $ks+1$ \\ \hline
    \hline
    \multicolumn{3}{|c|}{Fig. \ref{fig:simulation4}} \\ \hline
    $p_0$  & Probability of finding a better solution per round & 0.005 \\ \hline
    $s_2$  & Miner 2's optimization progress & 1.00001 \\ \hline
    $R$ & Block reward & $R_i$ \\ \hline    
    \end{tabular}}
    }
\end{table}

\begin{figure*}
\begin{raggedright} \hfill{}\centering\subfigure[]{ \includegraphics[width=0.43\textwidth]{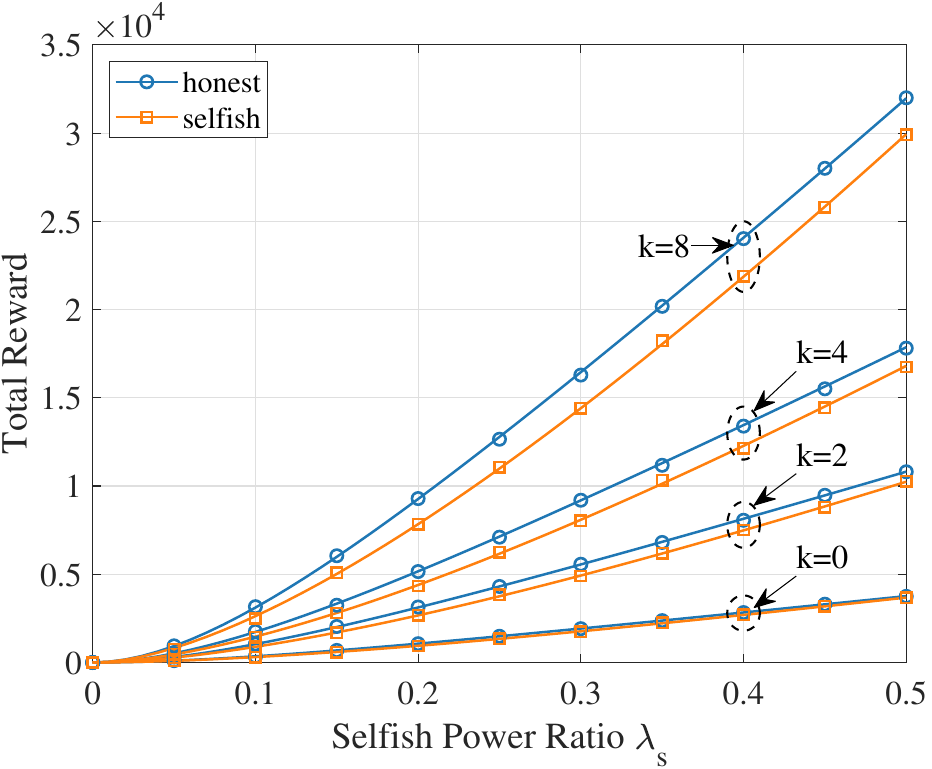}}
\hfill{}\subfigure[]{ \includegraphics[width=0.43\textwidth]{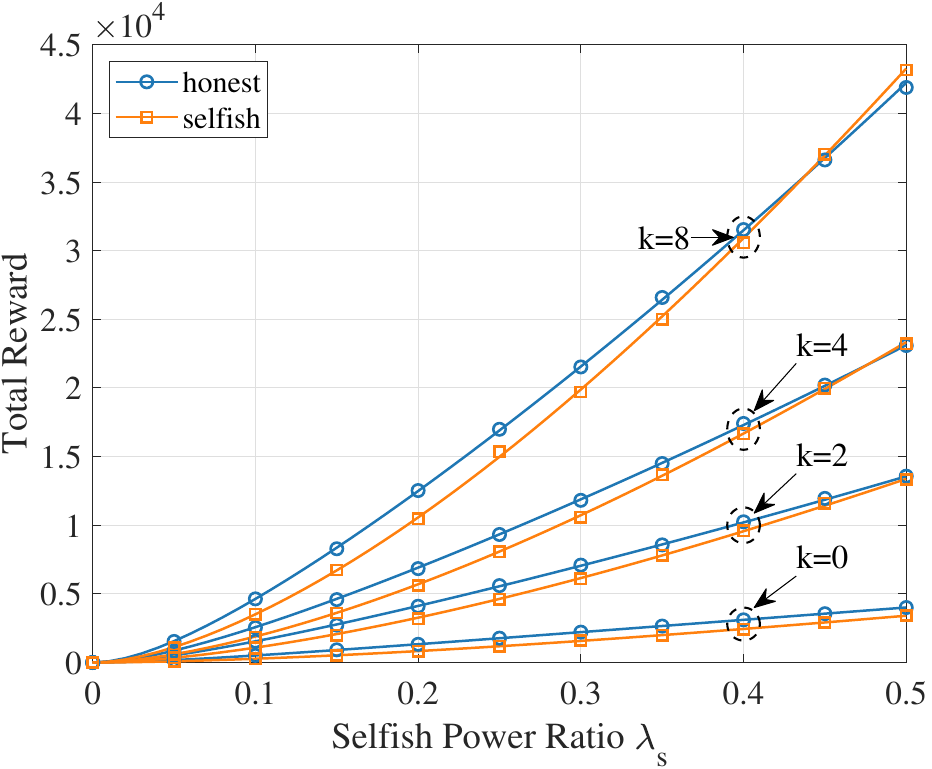}}\hfill{}

\end{raggedright} \centering {}\vspace{-0.3cm}

\caption{Total reward of the selfish miner's honest and selfish strategies
under linear reward function. (a) The selfish miner has a smaller
optimization progress. (b) The selfish miner has a larger optimization
progress. \label{fig:simulation3}}
\vspace{-0.2cm}
\end{figure*}
 
\begin{figure*}
\begin{minipage}[t]{.48\linewidth}

\begin{raggedright} \centering\includegraphics[width=0.9\textwidth]{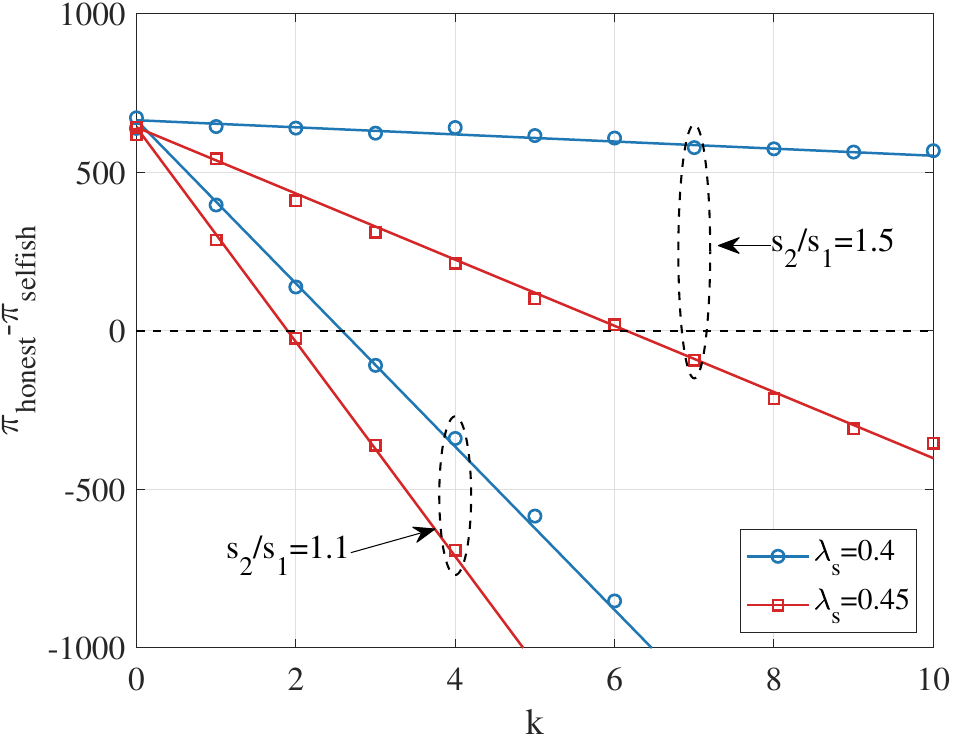}\end{raggedright}
\centering{}\vspace{-0.1cm}

\caption{Honest-selfish reward difference under different linear reward slope
$k$.\label{fig:simulation3c}}
\vspace{-0.3cm} \end{minipage} \hspace{.04\linewidth} \begin{minipage}[t]{.48\linewidth}
\begin{raggedright} \centering\includegraphics[width=0.9\textwidth]{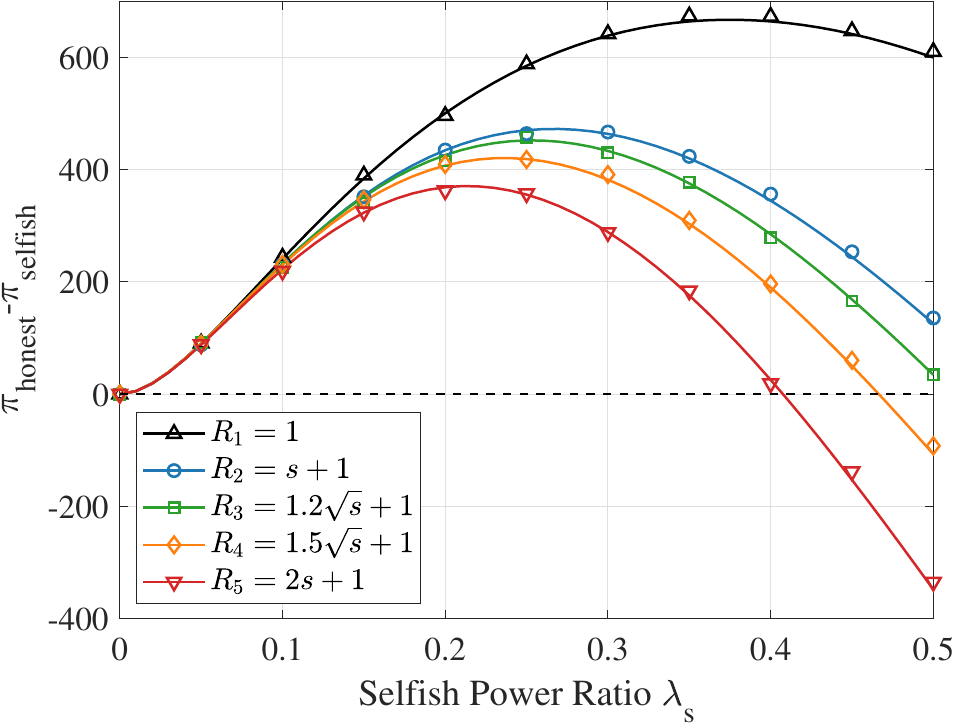}\end{raggedright}
\centering{}\vspace{-0.1cm}

\caption{Honest-selfish reward difference under different selfish power ratio
$\lambda_{s}$.\label{fig:simulation4}}
\vspace{-0.3cm}\end{minipage}
\end{figure*}

In Fig. \ref{fig:simulation2}, we show the impact of the security
overhead ratio $\eta$ on the total rewards of the selfish miner,
for honest and selfish strategies, respectively. The ratio $\eta$
varies from 0.5 to 0.9, and the selfish miner's computational power
ratio is chosen as 0.15, 0.3 and 0.45, respectively. The rest of the
settings are the same as Fig. \ref{fig:simulation1}. From Fig. \ref{fig:simulation2},
one can see that, when the selfish miner has a larger optimization
progress, the total reward of honest mining is always greater than
that of the selfish strategy (IF), while when the selfish miner has
a smaller optimization progress, the total reward of the honest strategy
only surpasses that of the selfish strategy (FS) when the security
overhead ratio $\eta$ exceeds a certain value. As the selfish miner
possesses more computational power, a greater $\eta$ is needed to
secure the framework from solution plagiarism. We mark the security
bound where honest strategy first beats selfish strategy under different
$\lambda_{s}$ in Fig. \ref{fig:simulation2}(a) based on theoretical
results. Fig. \ref{fig:simulation2}(a) not only aligns with Theorem
\ref{thm:eta>1/2}, but also gives a more concrete bound on $\eta$. 

To better illustrate the security bounds, we show the secure region
in Fig. \ref{fig:secure-region} according to the security conditions
against both selfishness and maliciousness. The x-axis represents
the security overhead ratio $\eta$, and the y-axis is the computational
power ratio of the adversary that can be either selfish or malicious.
The horizontal dashed line represents the honest majority assumption,
while the vertical dashed line is the security overhead lower bound
derived in Theorem \ref{thm:eta>1/2}. If the reward function $R=1$
is utilized, for every point located inside the secure region, any
selfish strategies with less than half of the computational power
result in a decrease in the expected payoff. When we use the reward
function $R=0.5s+1$, the secure region expands to the dotted red
line, and when $R=s+1$, the secure region further expands to the
dashed red line. This indicates that, with a greater slope for optimization
progress, the secure region becomes larger, as a smaller $\eta$ can
also secure optimization-based PoUW. Notably, the enlarged secure
region differs for different shapes of the reward function (see the
dash-dot line with $R=1.2\sqrt{s}+1$). Admittedly, the extra reward
for optimization progress is restricted by the upper bound \eqref{eq:upperbound-R}.
Hence, the enlarged area of the secure region is limited. The security
bound against maliciousness can be perceived as a modified honest
majority assumption for a PoUW blockchain that further shrinks the
secure region. A smaller $\eta$ decreases the maximum tolerated malicious
power ratio and leads to a decline in the resistance against malicious
miners.

In Fig. \ref{fig:simulation3} and Fig. \ref{fig:simulation3c}, we
apply a linear reward function $R\left(s\right)=ks+1$ with different
slope $k$, and demonstrate the total reward of honest and selfish
strategies under variant selfish power ratio $\lambda_{s}$. In Fig. \ref{fig:simulation3}, we set
$s_{1}=1$ and $s_{2}=1.5$, and $p_{0}$ and $q_{0}$ are set as
0.005 and 0.001, respectively. Fig. \ref{fig:simulation3c} uses the same parameters except that $s_2=1.1$ or $1.5$.  In Fig. \ref{fig:simulation3}(a),
it can be viewed that for different $k$, the honest strategy always
yields a higher total reward than the selfish strategy (FS), since
$\eta=5/6$ is large enough. In Fig. \ref{fig:simulation3}(b), as
$k$ increases, the amount of extra reward for solution improvement
grows, and the total reward of the selfish strategy (IF) surpasses
that of honest mining when $\lambda_{s}$ is large, making optimization-based
PoUW insecure. Fig. \ref{fig:simulation3c} further shows the total
reward difference between honest and selfish strategies, $\pi_{honest}-\pi_{selfish}$,
with different optimization progress, when the selfish miner has a
larger optimization progress. We see that, the difference decreases
as the slope $k$ increases. For smaller $s_{2}/s_{1}$, the difference
drops below zero more rapidly, meaning that the bound on the reward
function is tighter. This agrees with the proof of Theorem \ref{thm:nece-and-suff-honest-condition}
that the constraint \eqref{eq:condv2} becomes the tightest when $s_{2}\rightarrow s_{1}$. 

With $s_{1}=1$ and $s_{2}=1.0001$, we demonstrate the total reward
difference between honest and selfish strategies, with variant reward
functions, as shown in Fig. \ref{fig:simulation4}. It is observed
that, the difference $\pi_{honest}-\pi_{selfish}$ first rises from
zero (since when $\lambda_{s}=0$, we have $\pi_{honest}=\pi_{selfish}=0$),
then drops as the selfish power ratio $\lambda_{s}$ increases. The
reward functions $R_{1}=1$, $R_{2}=s+1$ and $R_{3}=1.2\sqrt{s}+1$
guarantee that $\pi_{honest}-\pi_{selfish}>0$ for any $0<\lambda_{s}<0.5$,
which preserves good security properties. In contrast, with the reward
functions $R_{4}=1.5\sqrt{s}+1$ and $R_{5}=2s+1$, a selfish miner
with $\lambda_{s}$ large enough can profit more from selfish strategies,
thus threatening the security of optimization-based PoUW. Fig. \ref{fig:simulation4}
provides a more concrete vision of the reward function design principle
that can be applied to actual blockchain systems. 

\section{Conclusions \label{sec:Conclusions} }

In this paper, we formalized a generic optimization-based PoUW framework,
in which the computational power wasted for useless PoW puzzles can
be partially reinvested into solving optimization problems. We analyzed our framework
by modeling the network and different strategies properly. For selfish
miners, we obtained the security conditions against selfishness, and derived the equivalent reward principle explicitly. Furthermore,
we looked into the trade-off between optimization efficiency and PoW
safeguard, and uncovered the lower bound for the security overhead.
We also considered our framework in the presence of malicious miners
and derived the security condition against maliciousness. Simulation
results validated the effectiveness of the framework together with
the correctness of the theoretical analysis.

In general, our analysis yields a relatively conservative conclusion: optimization-based PoUWs without direct plagiarism defense require necessary security overhead to maintain blockchain security. Future research may look into problem-specific anti-plagiarism methods to reduce the security overhead. Note that PoW might be replaced by some lightweight cryptographic primitives, such as verifiable delay functions (VDFs). Since the VDF evaluation is sequential and non-parallelizable, a fixed minimum latency is induced before block publication, which fits the security overhead. However, any adaptation that tries to replace the computationally heavy security overhead should be designed carefully to prevent potential vulnerabilities such as grinding. Another source of inefficiency comes from the fact that, despite being useful, all miners are computing the same optimization problem, which is essentially a waste of computation. The way that miners solve optimization problems may not have to replicate such a competing nature. For future research, our optimization-based PoUW framework can be extended by considering lightweight security overhead substitutes, processing parallelism, and throughput enhancement.

\appendices{}

\bibliographystyle{IEEEtran}
\bibliography{reference}

\begin{IEEEbiography}[{\includegraphics[width=1in,height=1.25in,clip,keepaspectratio] {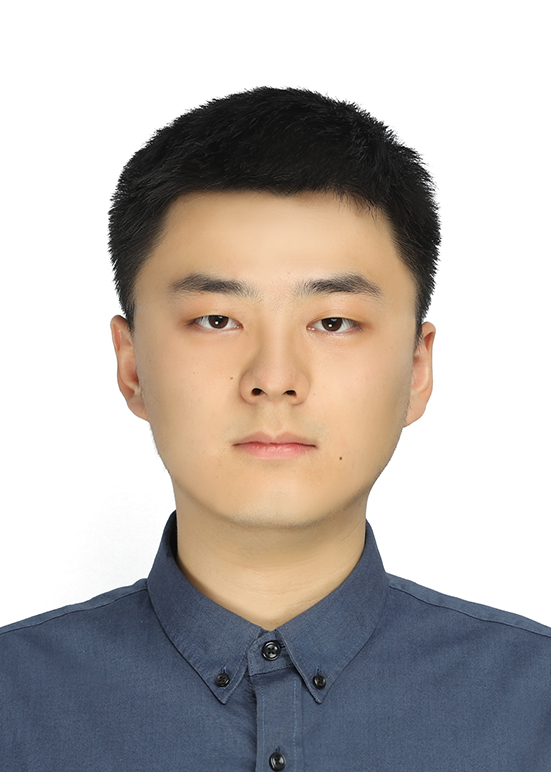}}]{Weihang Cao}
(Graduate Student Member, IEEE) received the B.E. and M.S. degree in electrical engineering from Southeast University, Nanjing, China, in 2021 and
2024, respectively. His research interests include blockchain technologies, optimization and wireless networks. \end{IEEEbiography}

\vspace{-3em}

\begin{IEEEbiography}[{\includegraphics[width=1in,height=1.25in,clip,keepaspectratio] {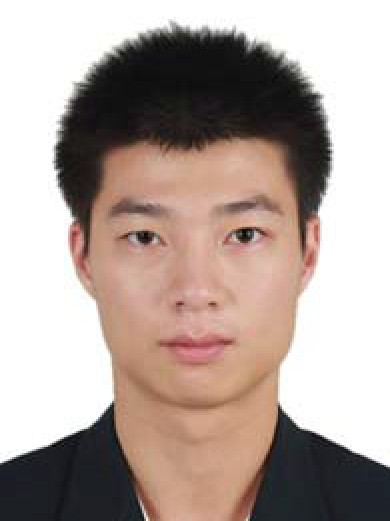}}]{Xintong Ling}
(Member, IEEE) received the B.E. and
Ph.D. degrees in electrical engineering from Southeast University,
Nanjing, China, in 2013 and 2018, respectively. From 2016 to 2018,
he was a visiting Ph.D. student at the Department of Electrical and
Computer Engineering, University of California at Davis. He is currently
an Associate Professor with the National Mobile Communications Research
Laboratory (NCRL), Southeast University, and also with the Purple
Mountain Laboratories. His current research interests focus on future-generation
wireless communications and networks, including blockchain technologies,
distributed systems, optical wireless communications, network modeling,
machine learning, and signal processing. He was selected for the Young
Elite Scientists Sponsorship Program by the China Association for
Science and Technology. \end{IEEEbiography}

\vspace{-3em}

\begin{IEEEbiography}[{\includegraphics[width=1in,height=1.25in,clip,keepaspectratio] {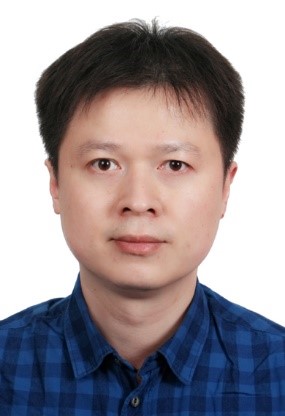}}]{Jiaheng Wang}
(Senior Member, IEEE) received the Ph.D. degree
in electronic and computer engineering from the Hong Kong University
of Science and Technology, Kowloon, Hong Kong, in 2010, and the B.E.
and M.S. degrees from the Southeast University, Nanjing, China, in
2001 and 2006, respectively.

He is currently a Full Professor at the National Mobile Communications
Research Laboratory (NCRL), Southeast University, Nanjing, China.
He is also with Purple Mountain Laboratories Nanjing, China. From
2010 to 2011, he was with the Signal Processing Laboratory, KTH Royal
Institute of Technology, Stockholm, Sweden. He also held visiting
positions at the Friedrich Alexander University Erlangen-Nürnberg,
Nürnberg, Germany, and the University of Macau, Macau. His research
interests are mainly on communication systems, wireless networks and
blockchain technologies.

Dr. Wang has published more than 170 articles on international journals
and conferences. He serves as an Associate Editor for the IEEE Transactions
on Wireless Communications, an Associate Editor for the IEEE Transactions
on Communications, and a Senior Area Editor for the IEEE Signal Processing
Letters. He was a recipient of the Humboldt Fellowship for Experienced
Researchers and the best paper awards of IEEE GLOBECOM 2019, ADHOCNETS
2019, and WCSP 2022 and 2014. \end{IEEEbiography}

\vspace{-3em}

\begin{IEEEbiography}[{\includegraphics[width=1in,height=1.25in,clip,keepaspectratio] {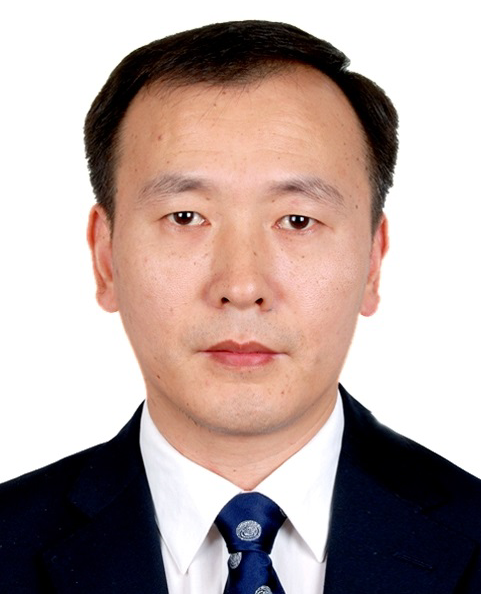}}]{Xiqi Gao}
(Fellow, IEEE)
received the Ph.D. degree in electrical engineering from Southeast
University, Nanjing, China, in 1997.

He joined the Department of Radio Engineering, Southeast University,
in April 1992. From September 1999 to August 2000, he was a Visiting
Scholar with the Massachusetts Institute of Technology, Cambridge,
MA, USA, and Boston University, Boston, MA, USA. Since May 2001, he
has been a Professor of information systems and communications. From
August 2007 to July 2008, he visited the Darmstadt University of Technology,
Darmstadt, Germany, as a Humboldt Scholar. His current research interests
include broadband multicarrier communications, massive MIMO wireless
communications, satellite communications, optical wireless communications,
information theory, and signal processing for wireless communications.

Dr. Gao received the Science and Technology Awards of the State Education
Ministry of China in 1998, 2006, and 2009, the National Technological
Invention Award of China in 2011, the Science and Technology Award
of Jiangsu Province of China in 2014, and the 2011 IEEE Communications
Society Stephen O. Rice Prize Paper Award in the field of communications
theory. From 2007 to 2012, he served as an Editor for the IEEE Transactions
on Wireless Communications. From 2009 to 2013, he served as an Associate
Editor for the IEEE Transactions on Signal Processing. From 2015 to
2017, he served as an Editor for the IEEE Transactions on Communications.
\end{IEEEbiography}

\vspace{-3em}

\begin{IEEEbiography}[{\includegraphics[width=1in,height=1.25in,clip,keepaspectratio] {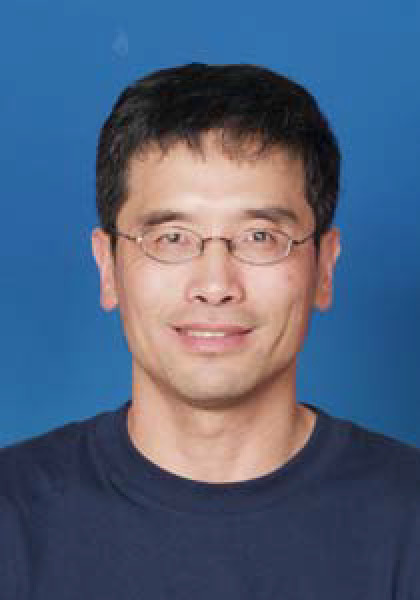}}]{Zhi Ding}
(Fellow, IEEE)
received the Ph.D. degree in electrical engineering from Cornell University
in 1990.,He is currently with the Department of Electrical and Computer
Engineering, University of California at Davis (UC Davis), Davis,
where he was a Distinguished Professor. From 1990 to 2000, he was
a Faculty Member of Auburn University and later the University of
Iowa. He joined the College of Engineering, UC Davis, in 2000. His
major research interests and expertise cover the areas of wireless
networking, communications, signal processing, multimedia, and learning.
He has supervised over 30 Ph.D. dissertations, since joining UC Davis.
His research team of enthusiastic researchers works very closely with
industry to solve practical problems and contributes to technological
advances. His team has collaborated with researchers around the world
and welcomes self-motivated young talents as new members. He currently
serves as the Chief Information Officer and the Chief Marketing Officer
of the IEEE Communications Society. He was an Associate Editor of
IEEE Transactions on Signal Processing from 1994 to 1997 and from
2001 to 2004 and an Associate Editor of IEEE Signal Processing Letters
from 2002 to 2005. He was a member of Technical Committee on Statistical
Signal and Array Processing and a member of Technical Committee on
Signal Processing for Communications (1994--2003). He was the General
Chair of the 2016 IEEE International Conference on Acoustics, Speech,
and Signal Processing and the Technical Program Chair of the 2006
IEEE Globecom. He was also an IEEE Distinguished Lecturer of Circuits
and Systems Society (2004--2006) and Communications Society (2008--2009).
He served as a Steering Committee Member for IEEE Transactions on
Wireless Communications (2007--2009) and the Chair (2009--2010).
He is the coauthor of the textbook: Modern Digital and Analog Communication
Systems (5th edition, Oxford University Press, 2019). He received
the IEEE Communication Society\textquoteright s WTC Award in 2012
and the IEEE Communication Society\textquoteright s Education Award
in 2020. \end{IEEEbiography}
\end{document}